\documentclass[10pt]{article}

\usepackage[utf8]{inputenc}
\usepackage[T1]{fontenc}
\usepackage[english]{babel}
\usepackage[a4paper,margin=1in]{geometry}

\usepackage{amssymb}
\usepackage{amsmath}
\usepackage{amsthm}
\usepackage{mathtools}

\usepackage[braket, qm]{qcircuit}

\usepackage{algorithm}
\usepackage{algpseudocode}

\usepackage{graphicx}
\usepackage{subcaption}
\usepackage{float}
\usepackage{tikz}
\usepackage{array}
\usepackage{makecell}
\usepackage{enumitem}

\newcommand{\C}{\mathbb{C}}
\newcommand{\R}{\mathbb{R}}

\newcommand{\Tr}{\operatorname{Tr}}

\newcommand{\Skew}{\operatorname{Skew}}
\newcommand{\Exp}{\operatorname{Exp}}
\newcommand{\Trt}{\operatorname{Trt}}

\newcommand{\grad}{\operatorname{grad}}
\newcommand{\poly}{\operatorname{poly}}
\newcommand{\proj}{\operatorname{Proj}}
\newcommand{\Hess}{\operatorname{Hess}}

\theoremstyle{plain}

\newtheorem{lemma}{Lemma}

\newtheorem{proposition}{Proposition}

\theoremstyle{definition}

\newtheorem{remark}{Remark}

\usepackage[colorlinks=true, allcolors=blue]{hyperref}

\usepackage[capitalize]{cleveref}

\newcommand{\email}[1]{\href{mailto:#1}{#1}}

\usepackage{lineno}

\usepackage{xcolor}
\definecolor{mygreen}{RGB}{205, 222, 194}

\begin{document}

  \title{Quantum circuit design from a retraction-based Riemannian optimization framework}

  \author{Zhijian Lai$^{1}$, Hantao Nie$^{1}$, Jiayuan Wu$^{2}$, Dong An$^{1}$\\
  \small $^{1}$Beijing International Center for Mathematical Research, Peking University, Beijing 100871, China\\
  \small $^{2}$Wharton Department of Statistics and Data Science, University of Pennsylvania, Philadelphia, PA 19104, USA\\
  \small Emails: \email{lai\_zhijian@pku.edu.cn}; \email{nht@pku.edu.cn}; \email{jyuanw@wharton.upenn.edu}; \email{dongan@pku.edu.cn}}
  \date{}

  \maketitle

  \begin{abstract}
  Designing quantum circuits for ground state preparation is a fundamental task in quantum information science. However, standard Variational Quantum Algorithms (VQAs) are often constrained by limited ansatz expressivity and difficult optimization landscapes. To address these issues, we adopt a geometric perspective, formulating the problem as the minimization of an energy cost function directly over the unitary group. We establish a retraction-based Riemannian optimization framework for this setting, ensuring that all algorithmic procedures are implementable on quantum hardware. Within this framework, we unify existing randomized gradient approaches under a Riemannian Random Subspace Gradient Projection (RRSGP) method. While recent geometric approaches have predominantly focused on such first-order gradient descent techniques, efficient second-order methods remain unexplored. To bridge this gap, we derive explicit expressions for the Riemannian Hessian and show that it can be estimated directly on quantum hardware via parameter-shift rules. Building on this, we propose the Riemannian Random Subspace Newton (RRSN) method, a scalable second-order algorithm that constructs a Newton system from measurement data. Numerical simulations indicate that RRSN achieves quadratic convergence, yielding high-precision ground states in significantly fewer iterations compared to both existing first-order approaches and standard VQA baselines. Ultimately, this work provides a systematic foundation for applying a broader class of efficient Riemannian algorithms to quantum circuit design.
  \end{abstract}

{\small
\noindent\textbf{Keywords:} Quantum circuit design; Ground state preparation; Riemannian optimization; Gradient descent methods; Newton methods\par
}

\section{Introduction}

Designing quantum circuits to prepare specific quantum states lies at the heart of quantum information science.
This task, often framed as the ground state preparation for a given Hamiltonian, is fundamental to realizing the full potential of quantum computing across a broad range of domains.
Prominent applications include simulating electronic structures in quantum chemistry \cite{peruzzo2014variational,mcardle2020quantum,kandala2017hardware}, solving combinatorial optimization problems via the Quantum Approximate Optimization Algorithm (QAOA) \cite{farhi2014quantum,zhou2020quantum}, and enabling Quantum Machine Learning (QML) tasks \cite{biamonte2017quantum,schuld2015introduction}. In the Noisy Intermediate-Scale Quantum (NISQ) era \cite{preskill2018quantum}, improving the efficiency and practical feasibility of such circuit designs is particularly important.

Currently, the dominant paradigm for addressing these problems is the Variational Quantum Algorithm (VQA) framework \cite{cerezo2021variational,bharti2022noisy}, which relies on optimizing the Parameterized Quantum Circuits (PQCs) \cite{benedetti2019parameterized}.
Throughout the paper, we consider an $N$-qubit system with $p:=2^N$. Given an initial state $\psi_{0} = \lvert \psi_{0}\rangle\langle \psi_{0}\rvert$ and a target Hamiltonian $O$, the standard VQA approach aims to solve
\begin{equation}\label{pro-pqc}\tag{P1}
  \min_{\boldsymbol{\theta}\in\mathbb{R}^{m}}f (\boldsymbol{\theta})
  =\Tr\bigl\{O U (\boldsymbol{\theta}) \psi_{0} U (\boldsymbol{\theta})^{\dagger}\bigr\},
\end{equation}
where the structure of PQC (i.e., ansatz) $U(\boldsymbol{\theta})$ is fixed a priori, typically based on heuristic designs, such as the hardware-efficient ansatz \cite{kandala2017hardware}, unitary coupled-cluster ansatz \cite{peruzzo2014variational,romero2018strategies}, or Hamiltonian variational ansatz \cite{wecker2015progress,wiersema2020exploring}. The parameters $\boldsymbol{\theta}\in\mathbb{R}^{m}$ are optimized by classical methods. This paradigm is hardware-friendly as it keeps a constant circuit depth and uses well-established classical optimization routines.

However, the VQA framework faces some bottlenecks.
First, the fixed ansatz limits the expressivity of the model, because the true ground state may lie outside the reachable state space generated by $U(\boldsymbol{\theta})$ \cite{sim2019expressibility,holmes2022connecting}.
Second, the cost landscape of \eqref{pro-pqc} is often highly non-convex and riddled with local minima \cite{bittel2021training}. More critically, VQAs suffer from the barren plateau \cite{mcclean2018bp,larocca2025barren}, where gradients vanish exponentially with the number of qubits, making training effectively impossible.

In light of these challenges, there is growing interest in a geometric optimization perspective. The central idea is to lift the restriction of a fixed ansatz and instead optimize directly over the full space of quantum circuits.
Specifically, one formulates the task as a Riemannian optimization (also called manifold optimization) problem \cite{edelman1998geometry,absil2009optimization,boumal2023introduction,hu2020brief}:
\begin{equation}\label{pro-manifold}
  \tag{P2}
  \min_{U\in \mathrm{U} (p)}f (U)
  =\Tr\bigl\{O U \psi_{0} U^{\dagger}\bigr\},
\end{equation}
where the search space is the compact Lie group of $p\times p$ unitary matrices:
\begin{equation}
  \mathrm{U} (p)=\bigl\{ U\in\mathbb{C}^{p\times p} \big| U^{\dagger}U=I\bigr\}.
\end{equation}
This set, known as the unitary group\footnote{To simplify the description, we do not use the special unitary group $ \mathrm{SU} (p)$ in this paper. In the implementation, it suffices to remove the all-identity term from the Pauli words.}, is a Riemannian manifold of dimension $p^2=4^N$.
By treating the circuit $U$ itself as the variable on a manifold, one can use the rich geometric tools and algorithms from Riemannian optimization.

Recently, many works \cite{wiersema2023optimizing,pervez2025riemannian,malvetti2024randomized,magann2023randomized,mcmahon2025equating} have explored this geometric viewpoint.
These approaches typically start from the Riemannian gradient flow associated with \eqref{pro-manifold}, namely, $\frac{d}{d t} U(t)=-\operatorname{grad} f(U(t))$ with $U(t) \in \mathrm{U}(p),$ and then discretize it in time, leading to the Riemannian Gradient Descent (RGD) update $U_{k+1}=e^{-t_k\left[O, \psi_k\right]}  U_k$ with $U_0=I$, where the (Riemannian) gradient term $\left[O, \psi_k\right]$ is the commutator between $O$ and $\psi_k$, and $\psi_k=U_k \psi_0 U_k^{\dagger}$ is the current state.
However, a practical difficulty of RGD is the efficient implementation (or approximate simulation) of the additional gate $e^{-t_k\left[O, \psi_k\right]} $. Since this gate depends adaptively on $\psi_k$, realizing it typically requires repeatedly extracting information about the current state, leading to an exponential computational cost in terms of the number of iterations.
To address this issue, one often projects $\left[O, \psi_k\right]$ onto a chosen subset of Pauli words (e.g., the set of one- and two-local Pauli words) and implements the resulting evolution using a Trotter approximation \cite{wiersema2023optimizing}.
It has also been emphasized that randomized selection of Pauli words can substantially ensure the convergence behavior of such schemes \cite{pervez2025riemannian}.
Indeed, it can be viewed as a randomized variant of RGD on the manifold, for which convergence to the ground state from almost all initial states has been established \cite{malvetti2024randomized}.
A related variant that updates along a single randomly chosen Pauli word per iteration has also been studied \cite{malvetti2024randomized,magann2023randomized,mcmahon2025equating}.
The above approaches are collectively referred to as \textit{randomized RGD} methods.
On the other hand, the RGD update can also be interpreted through the lens of quantum imaginary-time evolution \cite{motta2020determining,mcmahon2025equating} and double-bracket quantum algorithms \cite{gluza2024double,suzuki2025double,suzuki2025grover}.

Although these studies have explicitly utilized Riemannian gradient descent, a comprehensive modern Riemannian optimization framework \cite{absil2009optimization,boumal2023introduction,hu2020brief} tailored for problem \eqref{pro-manifold} is still absent.
As a result, the literature has predominantly focused on first-order methods such as RGD, while second-order Riemannian methods remain largely unexplored regarding their implementability on quantum hardware.
Moreover, the convergence analyses do not directly use the established Riemannian optimization theory, which makes them problem-specific and hard for generalization to different choices in the algorithm design.

The modern Riemannian optimization framework \cite{absil2009optimization,boumal2023introduction,hu2020brief} is \textit{retraction-based}, where retractions \cite{adler2002newton} are the most general class of mappings that map tangent vectors back onto the manifold while still guaranteeing convergence.
In this paper, we develop a complete retraction-based framework for problem \eqref{pro-manifold}, in which every algorithmic component is implementable on quantum hardware.
The main contributions of this work can be summarized as follows.

\begin{enumerate}
  \item \textbf{First-order methods.} We formalize the Trotter approximation \cite{trotter1959product,suzuki1991general} as a retraction on unitary group, thereby bridging the algorithmic design with standard optimization methodology.
        We then interpret the randomized RGD methods introduced above \cite{pervez2025riemannian,malvetti2024randomized,magann2023randomized,mcmahon2025equating} as a \emph{Riemannian Random Subspace Gradient Projection} (RRSGP) method: at each iteration, the full Riemannian gradient is randomly orthogonally projected onto a $d=\poly(N)$-dimensional subspace of the current tangent space, and the update is taken along the resulting projected gradient.
        Furthermore, we introduce an exact line-search technique to determine the optimal step size, which is equivalent to solving a single-parameter VQA subproblem.

  \item \textbf{Second-order methods.} We derive an explicit expression for the Riemannian Hessian (i.e., the manifold analogue of the Euclidean Hessian) of the cost function in \eqref{pro-manifold}. Crucially, we demonstrate that the Riemannian Hessian can be estimated directly on quantum hardware via quantum measurements using parameter-shift rules \cite{mari2021estimating,schuld2019evaluating}, just as existing studies \cite{wiersema2023optimizing,pervez2025riemannian,mcmahon2025equating} estimate the Riemannian gradient through quantum measurements.
        Building on this, we propose a fully quantum-implementable Riemannian Newton algorithm.
        To ensure scalability, we adopt a randomized strategy analogous to RRSGP and introduce the \textit{Riemannian Random Subspace Newton} (RRSN) method. RRSN constructs a $d\times d$ Newton system from measurement data, stabilized by regularization techniques.
        Our numerical results demonstrate that RRSN achieves a quadratic convergence rate, attaining the ground state in significantly fewer iterations compared to randomized RGD and VQA baselines.
        We show that a shallow VQA warm start effectively combines the hardware efficiency of PQCs with the high-accuracy refinement of Riemannian optimization.

\end{enumerate}

\paragraph{Organization}
In \cref{sec-1}, we review the first-order geometry of the unitary group and introduce the quantum-implementable retractions. We then develop the first-order algorithm framework, specifically the Riemannian Random Subspace Gradient Projection (RRSGP) method. \cref{sec-2} focuses on the second-order geometry and algorithms: we derive the explicit form of the Riemannian Hessian, show how to estimate it via measurements, and propose the Riemannian Random Subspace Newton (RRSN) method. \cref{sec-experiments} presents numerical simulations on the Heisenberg XXZ model to evaluate the performance of our algorithms.
We conclude in \cref{sec-discussion}.

\section{First-order geometry and algorithms}\label{sec-1}

This section provides a minimal review of the first-order geometry of the unitary group, followed by an introduction to several first-order Riemannian algorithms for problem \eqref{pro-manifold} that can be implemented on quantum hardware. For a complete and rigorous derivation of geometry, we refer the reader to the monographs \cite{absil2009optimization,boumal2023introduction}.
Second-order geometry and algorithms will be presented in the next section. Throughout the paper, we set $p=2^N$ for an $N$-qubit system.

\subsection{First-order geometry}

First, we review the fundamental geometric concepts and tools required for optimization on the unitary group.

\subsubsection{Tangent space and orthogonal projection}

The tangent space of a manifold at a given point consists of all directions in which one can move from that point. For the unitary group $\mathrm{U}(p)$, this space admits an explicit description in terms of skew-Hermitian matrices. For any $U\in \mathrm{U} (p)$, the tangent space at $U$ is given by
\begin{equation}
  T_U
  =\{ \Omega U:\Omega ^\dagger=-\Omega\}=\mathfrak{u} (p)U,
\end{equation}
where $\mathfrak{u}(p)=\left\{\Omega \in \mathbb{C}^{p \times p}: \Omega^{\dagger}=-\Omega\right\}$ is the Lie algebra of $\mathrm{U}(p)$. This Lie algebra is a real vector space consisting of all skew-Hermitian $p \times p$ matrices and is closed under the Lie bracket $[X, Y]=X Y-Y X$. In particular, at the identity $I\in \mathrm{U} (p)$, one has $T_I= \mathfrak{u}(p)$; since $\operatorname{dim} \mathfrak{u}(p)=p^2$, the manifold $\mathrm{U}(p)$ has dimension $p^2$.

We equip the ambient space $\mathbb{C}^{p \times p}\supseteq \mathrm{U}(p)$ with the real Frobenius inner product $\langle A, B \rangle = \Re \Tr (A^\dagger B).$ With respect to this inner product, the normal space at $U \in \mathrm{U} (p)$ is given by $N_U   = \bigl\{ N\in\C^{p\times p}: \langle N, X \rangle =0, \forall X\in T_U \bigr\} =  \{ HU:  H ^\dagger=H  \} =  \mathcal{H} (p) U,$ where $\mathcal{H}(p)$ denotes the space of Hermitian $p \times p$ matrices. Using the orthogonal decomposition $\mathbb{C}^{p \times p}=T_U \oplus N_U$, one obtains the orthogonal projection of any $Z \in \mathbb{C}^{p \times p}$ onto the tangent space $T_U$:
\begin{equation}\label{eq-projection}
  \mathcal{P}_U (Z): =\operatorname{Skew} (Z U^{\dagger}) U, \text { with } \operatorname{Skew} (A): =\tfrac{1}{2} (A-A^{\dagger}) .
\end{equation}

\begin{figure}
  \centering
  \includegraphics[width=1\linewidth]{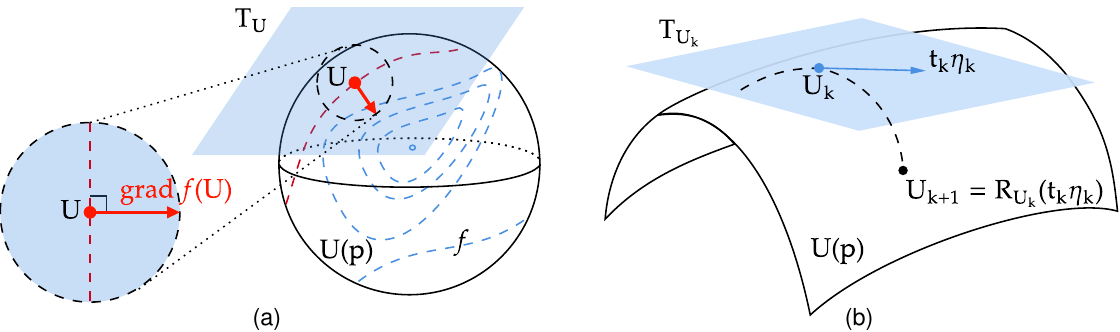}
  \caption{The geometric illustration of Riemannian gradient and Riemannian algorithm iterations.
    (a) For the sake of demonstration, we imagine the curved space $\mathrm{U} (p)$ as a sphere. Given a point $U$ on $\mathrm{U} (p)$, the tangent space $T_U$ is similar to the tangent plane at that point. We plot the contours (dash lines) of the cost function $f$ on $\mathrm{U} (p)$, where the Riemannian gradient $\grad f (U)$ is a special vector in the tangent plane. This vector is orthogonal to the contour and points in the direction of the fastest increase of $f$.
    (b) The iteration process of a typical Riemannian optimization algorithm $U_{k+1}=\mathrm{R}_{U_k}\left(t_{k} \eta_{k}\right)$, which is an extension of  iteration $x_{k+1} = x_k+t_k \eta_k$ in Euclidean space.}
  \label{fig:shiyi}
\end{figure}

\subsubsection{Riemannian gradient}

We regard $\mathrm{U}(p)$ as a Riemannian submanifold of $\mathbb{C}^{p \times p}$ by equipping each tangent space $T_U$ with the \textit{Riemannian metric} (some inner product) $ \langle X, Y \rangle_{U} \equiv \langle X, Y \rangle =  \langle \widetilde{X}, \widetilde{Y} \rangle = \Tr (\widetilde{X}^ \dagger \widetilde{Y})$ for all $X, Y \in T_U$, where $\widetilde{X}, \widetilde{Y} \in \mathfrak{u}(p)$ are the left skew-Hermitian representations, defined through $X=\widetilde{X} U$, $Y=\widetilde{Y} U$.
With this metric, the Riemannian gradient of any smooth function $f:\mathrm{U}(p) \subseteq \mathbb{C}^{p \times p} \to \mathbb{R}$ can be computed by (see \cite[Proposition 3.61]{boumal2023introduction})
\begin{equation}\label{eq-1530}
  \grad f (U)
  = \mathcal P_U\bigl (\nabla f (U)\bigr)
  \in T_U,
\end{equation}
where $\nabla f (U) \in \mathbb{C}^{p \times p} $ is the usual gradient. At a point $U$, the Riemannian gradient $\grad f (U)$ is the tangent vector of steepest ascent of $f$ and thus forms a basic component of Riemannian optimization algorithms; see (a) in \Cref{fig:shiyi}.

For our cost $f (U)= \Tr\left\{O U \psi_0 U^{\dagger}\right\}$ in \eqref{pro-manifold}, the Euclidean gradient is $\nabla f (U) = 2 O U \psi_0 .$ In general, when $A$ and $B$ are (skew-)Hermitian, then
\begin{equation}\label{eq-iden-skew}
  \Skew (2AB) =[A, B]
\end{equation}
holds. Applying the above with $A := O$ and $B := U \psi_0 U^\dagger$, and invoking \cref{eq-1530}, we obtain
\begin{equation}\label{eq-grad}
  \grad f (U)
  =\Skew (2 O U \psi_0 U^\dagger)U
  =[O, U \psi_0 U^\dagger]U \in T_U,
\end{equation}
where the commutator $\widetilde{\grad} f(U):=[O, U \psi_0 U^\dagger]\in \mathfrak{u} (p)$ serves as the left skew-Hermitian representation of $\grad f (U)$.

\subsubsection{Quantum-implementable retractions}

\textit{Retractions} \cite{adler2002newton,absil2009optimization} provide a principled way to move along tangent directions while ensuring that iterates remain on the manifold, and they constitute a central step of Riemannian optimization theory.
Define the tangent bundle $T\mathrm{U}(p) := \left\{(U, \eta) : U \in \mathrm{U}(p), \eta \in T_U \right\}.$
A retraction on $\mathrm{U}(p)$ is a smooth map $\mathrm{R}: T\mathrm{U}(p) \rightarrow \mathrm{U}(p),$ $(U, \eta) \mapsto \mathrm{R}_U(\eta),$ such that for any $U \in \mathrm{U}(p)$ and $ \eta \in T_U $, the induced curve $t \mapsto \gamma(t) := \mathrm{R}_U(t \eta)$ satisfies
\begin{equation}\label{eq-defn-retraction}
  \gamma (0)=U \text { and } \dot{\gamma} (0)=\eta .
\end{equation}
The above conditions are the key to ensuring convergence of the Riemannian algorithm, rather than the specific form of the retraction. We denote by $\mathrm{R}_U$ the restriction of R to $T_U$.
Once a retraction is fixed, the next iterate $U_{k+1}$ is obtained by selecting a tangent direction $\eta_k \in T_{U_k}$ at the current iterate $U_k$ and moving along the curve $t \mapsto \mathrm{R}_{U_k}\left (t \eta_k\right)$ with an appropriate step size $t_k$; see (b) in \Cref{fig:shiyi}.
This mechanism forms the core update rule of Riemannian optimization, whose details will be presented in \cref{sec-first-algo}.
Below, we discuss two retractions on $\mathrm{U}(p)$ that are \textit{implementable} on quantum hardware.

\paragraph{Riemannian Exponential map}
For any $U \in \mathrm{U}(p)$, the (Riemannian) Exponential map $\Exp_{U}: T_{U} \rightarrow \mathrm{U} (p)$ is given by
\begin{equation}\label{eq-Exp}
  \Exp_{U} (\eta)
  =\exp \bigl (\widetilde{\eta}\bigr) U,
\end{equation}
where we often use the symbol tilde to denote the left skew-Hermitian representation of the tangent vector in $T_{U}$, i.e., $\eta=\tilde{\eta} U$ with $\tilde{\eta} \in \mathfrak{u}(p)$. Consider the curve
\begin{equation}\label{eq-geodesic}
  \gamma (t): = \Exp_U (t \eta ) = \exp (t \widetilde{\eta}) U.
\end{equation}
It is straightforward to verify that $\gamma (0) = U$, and $\dot\gamma (0)= \left. \widetilde{\eta} \exp (t \widetilde{\eta})U \right|_{t=0}= \eta.$ Therefore, the Exponential map serves as a valid retraction.

The Exponential map is a fundamental concept in differential geometry \cite{lee2003smooth}; the curve it induces in \cref{eq-geodesic} provides the intrinsic analogue of a straight line on a manifold, namely, a geodesic.
On the other hand, we use $\exp (\cdot)$ or $e^{(\cdot)}$ for the ordinary matrix exponential, which should be distinguished from Exp.
In general, the Exponential map on a manifold does not always involve a matrix exponential. For instance, on the trivial manifold $\mathbb{R}^n$, the Exponential map takes the simple form $\Exp_x(\eta) = x + \eta$ for $x \in \mathbb{R}^n$ and $\eta \in T_x \equiv \mathbb{R}^n$.

In fact, the Exponential map is seldom used in Riemannian optimization due to its high computational cost on classical computers. For our manifold $\mathrm{U}(p)$, more efficient alternatives (such as the Cayley transform retraction \cite{wen2013feasible}, QR-based retraction \cite{sato2019cholesky,absil2009optimization}, or polar decomposition \cite{absil2012projection,absil2009optimization}) are typically preferred. However, the curve in \cref{eq-geodesic} is particularly well suited for implementing iterative updates in quantum circuits because $\exp(t\widetilde{\eta})=\exp(-i t\cdot i\widetilde{\eta})$ constitutes a standard quantum gate, with the step size $t$ serving as the rotation angle and $i\widetilde{\eta}$ acting as the Hermitian generator.

\paragraph{Trotter retraction}
The well-known first-order Trotter approximation \cite{trotter1959product,lloyd1996universal} also induces a valid retraction.
Let $\mathcal{P}^N=\{P^j\}_{j=1}^{4^N}$ denote the $N$-qubit Pauli words, where $P^j=\bigotimes_{\ell=1}^N \sigma_\ell^j$ with $\sigma_\ell^j\in\{I, X, Y, Z\}$. Then $i\mathcal{P}^N=\{iP^j\}$ is an orthogonal (non-normalized) basis of Lie algebra $\mathfrak{u} (p)$.
For any $U\in \mathrm{U} (p)$ and $\eta\in T_U$, write $\eta=\tilde{\eta}U$ with $\tilde{\eta}\in\mathfrak{u} (p)$, and expand $\tilde{\eta}$ in the Pauli basis:
\begin{equation}\label{eq-pauli-expansion}
  \tilde{\eta}=\sum_{j=1}^{4^N} i \omega^j P^j,
  \qquad
  \omega^j= \frac{\langle \tilde{\eta} , iP^j\rangle}{\langle iP^j , iP^j\rangle} =\frac{-i}{2^N}\Tr  (\tilde{\eta}P^j).
\end{equation}
The \textit{Trotter retraction} $\operatorname{Trt}_U: T_U \to \mathrm{U} (p)$ is defined by
\begin{equation}\label{eq-trt}
  \Trt_U(\eta) = \left( \prod_{j=1}^{4^N} \exp\left( i \omega^j P^j \right) \right) U.
\end{equation}
Consider the curve
\begin{equation}
  \gamma(t) := \Trt_U(t \eta) = \left( \prod_{j=1}^{4^N} \exp\left(i \omega^j P^j t\right) \right)U \equiv A(t) U.
\end{equation}

It is immediate that $\gamma(0) = U$. Applying the product rule yields 
\begin{equation}
  \dot{A}(t) = \sum_{k=1}^{4^N} \left( \prod_{j<k} \exp(i \omega^j P^j t) \right) \left( i \omega^k P^k \exp(i \omega^k P^k t) \right) \left( \prod_{j>k} \exp(i \omega^j P^j t) \right).
\end{equation}
Evaluating at $t = 0$, we obtain $\dot{\gamma}(0) = \dot{A}(0) U = \left( \sum_{k=1}^{4^N} i \omega^k P^k \right) U = \eta.$
Hence, $\Trt_U$ is a valid retraction. Obviously, this retraction stems from the first-order Trotter approximation:
\begin{equation}
  \exp\left(t\widetilde{\eta}\right) = \exp\left( \sum_{j=1}^{4^N} i\omega^j P^j t\right) = \prod_{j=1}^{4^N} \exp\left(i \omega^j P^j t\right) + O(t^2).
\end{equation}
The same idea extends naturally to the second-order \cite{strang1968construction} and higher-order Trotter schemes \cite{suzuki1991general}, producing Trotter-like retractions that more closely approximate the geodesic $t \mapsto \exp (t \tilde{\eta})$.
However, this refinement is unnecessary, since the first-order construction in \cref{eq-trt} already provides sufficient convergent properties, both in our theoretical analysis and in numerical experiments for Riemannian optimization.

We refer to the two retractions defined above, $\Exp$ and $\Trt$, as \textit{quantum-implementable retractions} because they both take the form of a standard quantum gate.
In contrast, many other retractions on the unitary group do not admit such physical implementations.
For example, the Cayley transform retraction \cite{wen2013feasible} requires matrix inversion, while the QR-based retraction \cite{sato2019cholesky,absil2009optimization} and polar decomposition \cite{absil2012projection,absil2009optimization} retraction rely on matrix factorizations. These procedures cannot be executed directly on quantum hardware.
One may say that the development of quantum computing has brought renewed attention to certain mathematical objects that were previously overlooked by classical computation.

\subsection{First-order algorithms}\label{sec-first-algo}

In this subsection, we first review the general framework of Riemannian optimization and then discuss the simplest first-order method, the Riemannian Gradient Descent (RGD). However, implementing RGD on quantum hardware can suffer from the curse of dimensionality. To address this, we consider a variant that projects the Riemannian gradient onto a randomly low-dimensional subspace, namely the Riemannian Random Subspace Gradient Projection (RRSGP). Finally, we also consider employing exact line-search to accelerate convergence. Our guiding principle is to ensure full consistency between circuit implementation and optimization framework, while using optimization techniques to enhance circuit design.

\subsubsection{General Riemannian algorithm framework}

In the Euclidean case, minimizing $f: \mathbb{R}^n \rightarrow \mathbb{R}$ often amounts to generating a sequence $\left\{x_k\right\}$ via updates of the form $x_{k+1}=x_k+t_k \eta_k,$ where $\eta_k \in  \mathbb{R}^n $ is a search direction (e.g., negative gradient or Newton direction) and $t_k>0$ is a step size, chosen as a fixed constant \cite{nesterov_introductory_2004} or by a line-search strategy such as Armijo backtracking \cite{nocedal1999numerical}, exact line-search \cite{nocedal1999numerical}, or Barzilai-Borwein rule \cite{raydan1993barzilai}.

On the $\mathrm{U}(p)$, this idea is adapted by replacing straight-line steps with moves along tangent directions followed by a retraction back onto the manifold. Given an initial point $U_0 \in \mathrm{U}(p)$, a Riemannian algorithm generates $\left\{U_k\right\} \subseteq \mathrm{U}(p)$ by
\begin{equation}\label{general-update}
  U_{k+1} = \mathrm{R}_{U_k}(t_k \eta_k), \quad k=0,1, \ldots,
\end{equation}
where $\eta_k \in T_{U_k}$ is a tangent vector in the current tangent space (e.g., the negative Riemannian gradient in \cref{eq-grad}, or the Riemannian Newton direction to be introduced in \cref{sec-2}), and $\mathrm{R}_{U_k}: T_{U_k} \rightarrow \mathrm{U}(p)$ is a retraction at $U_k$, such as the Exponential map $\Exp$ or the Trotter retraction $\Trt$ introduced previously; see (b) in \Cref{fig:shiyi}.
By choosing $\eta_k$ and $t_k$ in analogy with their Euclidean counterparts (e.g., Riemannian gradient descent \cite{edelman1998geometry,boumal2019global}, Riemannian conjugate gradient \cite{sato2022riemannian,abrudan2008efficient}, Riemannian Newton \cite{absil2009optimization,fernandes2017superlinear}, or Riemannian Barzilai-Borwein \cite{iannazzo2018riemannian}), one obtains Riemannian algorithms that inherit the corresponding convergence guarantees from the Euclidean setting. \cref{general-update} also serves as a prerequisite for the second-order algorithms in  \cref{sec-2} later.

\subsubsection{Riemannian gradient descent (RGD)}

We begin by considering the simplest method, namely Riemannian Gradient Descent (RGD) method with the Exponential map $\mathrm{R}= \Exp$ for problem \eqref{pro-manifold}. Its update rule is
\begin{equation}\label{eq-full}
  U_{k+1}
  = \Exp_{U_{k}} \left(- t_k \grad   f (U_k) \right)
  =\exp \left(t_k[\psi_k, O]\right) U_k  ,\quad k= 0,1,\dots,
\end{equation}
where step size $t_k>0$, $\psi_k: = U_k \psi_0 U_k^\dagger$ denotes the intermediate quantum state at iteration $k$, and $-\widetilde{\grad} f\left(U_k\right)=\left[\psi_k, O\right]\in \mathfrak{u}(p)$. Setting the initial gate to $U_0 = I$ (as we do throughout this paper), the resulting quantum state $\lvert \psi_T \rangle = U_T \lvert \psi_0 \rangle$ after $T$ updates is given by
\begin{equation}
  \left|\psi_T\right\rangle
  =
  \exp \left (t_{T-1}\left[\psi_{T-1}, O\right]\right) \cdots
  \exp \left (t_1\left[\psi_1, O\right]\right) \exp \left (t_0\left[\psi_0, O\right]\right)
  \left|\psi_0\right\rangle .
\end{equation}
The overall plan operates as follows: information obtained from the state generated by the current circuit is used to select additional gates that, when appended, drive the circuit output closer to the ground state. This procedure is applied iteratively, yielding a progressively refined circuit.
In contrast to traditional VQAs, whose ansatz is fixed in advance, the circuit architecture here evolves dynamically during the optimization process. Such dynamical architectures are referred to as adaptive circuits, for example, AdaptVQE \cite{grimsley2019adaptive,tang2021qubit}.

The key challenge now is to implement the update gates $\exp\bigl(t_k [\psi_k, O]\bigr)$ on circuits.
Although one can realize this via various Hamiltonian simulation techniques, here we instead employ the Trotter retraction $\mathrm{R} = \Trt$, whose quantum implementation is simpler than that of $\mathrm{R} = \Exp$.
From the standpoint of convergence, employing the Trotter retraction still provides the guarantees we require, since it satisfies conditions \cref{eq-defn-retraction}. Consequently, using the Trotter retraction, the RGD update takes the form
\begin{equation}\label{eq-update-Trt}
  U_{k+1}
  = \Trt_{U_{k}}\bigl(-t_k \grad f (U_k)\bigr)
  = \left( \prod_{j=1}^{4^N} \exp\left(i \omega^j _kP^j t_k\right) \right) U_k  ,\quad k= 0,1,\dots,
\end{equation}
where $-\widetilde{\grad} f (U_k)=[\psi_k, O]\in \mathfrak{u}(p)$ admits the Pauli expansion (see \cref{eq-pauli-expansion}):
\begin{equation}\label{eq-1823}
  [\psi_k, O] = \sum_{j=1}^{4^N} i \omega_k^j P^j, \quad \omega_k^j: = \tfrac{-i}{2^N} \Tr\left\{[\psi_k, O] P^j \right\} = -\tfrac{1}{2^N} i\Tr\left\{ \psi_k[O, P^j] \right\} \in \mathbb{R},
\end{equation}
where the last equality follows from $\Tr ([A, B] C) = \Tr (A[B, C])$.

In fact, we can determine the value of $\omega_k^j$ by measurements on the current state $\psi_k$.
To this end, we require the following lemma; its proof is provided in \ref{app-proofs}. (For the moment, we ignore the exponential term $4^N$ and focus on the exact algorithmic formulation.)

\begin{lemma}[Gradient coefficient estimation]\label{lem-pqc-1}
  Given any pure density operator $\psi$, any Hamiltonian $O$, and any Hermitian generator $P$, consider the expectation value on a variational circuit w.r.t. a single real parameter $x$, i.e.,
  \begin{equation}
    g (x)=\Tr \{ O U (x) \psi U (x)^{\dagger}\},
  \end{equation}
  where $U (x)=e^{i x P / 2}.$ Then, $g^{\prime}(0)=\tfrac{i}{2} \Tr\{\psi[O, P]\}.$
  Moreover, if  $P^2=I$, then by the parameter-shift rule \cite[Eq. (9)]{mari2021estimating}, one has
  \begin{equation}
    g^{\prime}(0)=\frac{1}{2}\left[g\left(\frac{\pi}{2}\right)-g\left(-\frac{\pi}{2}\right)\right].
  \end{equation}
\end{lemma}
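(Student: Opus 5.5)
The plan is to compute $g'(x)$ by differentiating the conjugated state directly, and then to obtain the parameter-shift identity from an exact trigonometric form of $g$ that holds when $P^2=I$.

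\emph{Derivative at $x=0$.} Since $U(x)=e^{ixP/2}$ with $P$ Hermitian, we have $U'(x)=\tfrac{i}{2}PU(x)$ and $(U(x)^\dagger)'=-\tfrac{i}{2}U(x)^\dagger P$. Write $\psi(x):=U(x)\psi U(x)^\dagger$. The product rule then gives
\begin{equation*}
  \tfrac{d}{dx}\psi(x)=\tfrac{i}{2}[P,\psi(x)],
\end{equation*}
so that
\begin{equation*}
  g'(x)=\tfrac{i}{2}\Tr\{O[P,\psi(x)]\}.
\end{equation*}
Using the cyclic identity $\Tr([A,B]C)=\Tr(A[B,C])$ already used after \cref{eq-1823}, we rewrite $\Tr\{O[P,\psi]\}=\Tr\{\psi[O,P]\}$. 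Evaluating at $x=0$, where $\psi(0)=\psi$, yields
\begin{equation*}
  g'(0)=\tfrac{i}{2}\Tr\{\psi[O,P]\}.
\end{equation*}
The only care needed is the sign convention in the cyclic identity, and I will check it explicitly by expanding both sides.

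\emph{Parameter-shift rule.} If $P^2=I$, then $e^{ixP/2}=\cos(x/2)I+i\sin(x/2)P$. Substituting this into $g(x)$ and expanding makes $g$ a trigonometric polynomial:
\begin{equation*}
  g(x)=\cos^2(x/2)\,\Tr\{O\psi\}+\sin^2(x/2)\,\Tr\{OP\psi P\}+\tfrac{i}{2}\sin x\,\Tr\{O[P,\psi]\}.
\end{equation*}
The cross term uses $2\sin(x/2)\cos(x/2)=\sin x$. Rewriting $\cos^2$ and $\sin^2$ via $\cos x$, this becomes $g(x)=a+b\cos x+c\sin x$ with
\begin{equation*}
  c=\tfrac{i}{2}\Tr\{O[P,\psi]\}=g'(0).
\end{equation*}
Then $g(\pi/2)-g(-\pi/2)=2c$, which is exactly the claimed formula. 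Alternatively, one can cite \cite[Eq. (9)]{mari2021estimating} directly, since the generator $P/2$ has eigenvalues $\pm\tfrac12$. I expect no step to be a real obstacle. The most error-prone step is the bookkeeping of the cross terms in the expansion, namely confirming that the $\sin x$ coefficient coincides with the derivative computed in the first part.
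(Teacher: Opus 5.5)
Your proof is correct. For the derivative it is essentially the paper's argument: the paper differentiates the Heisenberg-picture operator $U(x)^{\dagger} O U(x)$ to get $\tfrac{i}{2}U(x)^{\dagger}[O,P]U(x)$ directly, which avoids the cyclic-trace step you need after differentiating $U(x)\psi U(x)^{\dagger}$. For the parameter-shift identity the paper only cites Eq. (9) of Mari et al., so your expansion $g(x)=a+b\cos x+c\sin x$ with $c=g'(0)$ is a valid self-contained replacement for that citation.
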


To estimate the coefficients $\omega_k^j$ in \cref{eq-update-Trt}, we consider the following PQC cost function:
\begin{equation}
  g_k^j (x) :=  \Tr \{O U (x)\psi_k U (x)^\dagger \}, \quad U (x) = e^{i x P^j / 2}.
\end{equation}
Since $P^j$ is a Pauli word, by \cref{lem-pqc-1} and \cref{eq-1823}, we obtain
\begin{equation}\label{eq-1108}
  \omega_k^j = \frac{1}{2^N}\left[g_k^j\left(-\frac{\pi}{2} \right) - g_k^j\left(\frac{\pi}{2} \right)\right].
\end{equation}
Thus, $\omega_k^j$ can be directly estimated on a quantum device.
For any Pauli word $P^j$ and any real evolution time $\omega_k^j t_k$, the corresponding unitary $\exp(i P^j \cdot (\omega_k^j  t_k ))$ can be synthesized using standard Hamiltonian simulation techniques. In this work, we adopt such Pauli-evolution gates as fundamental building blocks.

Consequently, ignoring the exponential scaling $4^N$, the standard RGD with the Trotter retraction can be implemented entirely using quantum circuits, achieving a seamless unification of quantum implementation and optimization framework.

\subsubsection{Riemannian random subspace gradient projection (RRSGP)}\label{sec-rrsgp}

Since the Trotter update in \cref{eq-update-Trt} demands exponentially many gates per iteration, which is clearly impractical, a simple remedy is to choose a much smaller subset of $d$ Pauli words at random from the full set of $4^N$. The update then becomes
\begin{equation}\label{eq-trt-random}
  U_{k+1}
  = \left( \prod_{j\in S_k} \exp\left(i \omega^j _k P^j t_k\right) \right) U_k,\quad k=0,1, \ldots,
\end{equation}
where $S_k\subseteq\{1,2,\dots,4^N\}$ is a random subset of size $d=\mathrm{poly}(N)$. Each coefficient $\omega^j_k$ is computed as \cref{eq-1108}, and this procedure adds only $d$ gates per update. In fact, we are not introducing a new retraction here, but rather a Riemannian analogue of Random Subspace Gradient Projection (RSGP) method.

Let us review RSGP in the Euclidean setting. To solve $\min_{x\in\mathbb{R}^n} f(x)$ with very large $n$, one projects the full gradient onto a random $d$-dimensional subspace $\mathcal{U}_k \subseteq \mathbb{R}^n$ ($d \ll n$) and then takes a descent step within that subspace. Concretely, the RSGP update is
\begin{equation}\label{eq-ssgp}
  x_{k+1}
  = x_{k}
  + t_k \, \Pi_k \bigl(-\nabla f(x_k)\bigr),
\end{equation}
where $\Pi_k \in \mathbb{R}^{n\times n}$ is the orthogonal projector onto $\mathcal{U}_k$. Equivalently, one often writes $\Pi_k = P_k P_k^{\top}$, where $P _k\in\mathbb{R}^{n\times d}$ is a random matrix satisfying $P _k^{ \top}P _k =I_d$ and $\mathbb{E}\bigl[P _kP _k^{ \top}\bigr] =  \frac{d}{n}  I_n$.
This RSGP approach preserves convergence guarantees in expectation, comparable to those of full gradient descent \cite{kozak2019stochastic,kozak2021stochastic}.

In the Riemannian context, we will orthogonally project the full Riemannian gradient $\grad f \left(U_k\right) \in T_{U_k}$ onto a random $d$-dimensional subspace of the current tangent space $T_{U_k}$. Since each tangent space $T_{U_k}=\mathfrak{u}(p)U_k$ is isomorphic to the Lie algebra $\mathfrak{u}(p)$, it suffices to project the skew-Hermitian part $\widetilde{\grad} f \left(U_k\right)=\sum_{j=1}^{4^N} i \omega_k^j P^j \in \mathfrak{u}(p)$ onto a random $d$-dimensional subspace of $\mathfrak{u}(p)$.
Concretely, we sample $d$ Pauli words from the full basis $\{i P^j\}$, and let the selected index set be $S_k \subseteq \{1,2, \ldots, 4^N\}$. We use $\mathcal{S}_k \subseteq T_{U_k}$ to denote the subspace spanned by these selected directions. The projected gradient is then
\begin{equation}\label{eq-zeta}
  \zeta_k := \proj_{\mathcal{S}_k}(-\grad f \left(U_k\right))=\left(\sum_{j\in S_k} i \omega_k^j P^j \right) U_k \in \mathcal{S}_k.
\end{equation}
Substituting $\zeta_k$ into the Trotter retraction $U_{k+1}=\Trt_{U_k}\left(t_k \zeta_k\right)$ exactly recovers the update in \cref{eq-trt-random}. The complete RRSGP procedure is summarized in \cref{alg-RRSGP}.

\begin{algorithm}[H]
  \caption{Riemannian Random Subspace Gradient Projection (RRSGP) Method for \eqref{pro-manifold}}
  \label{alg-RRSGP}
  \begin{algorithmic}[1]
    \Require
    $N$-qubits system, Hamiltonian $O$, initial state $\psi_0$,  subspace dimension $d = \poly(N)$. Pauli set $\mathcal{P}^N=\{P^j\}_{j=1}^{4^N}$ with $P^j=\bigotimes_{\ell=1}^N \sigma_{\ell}^j$, $\sigma_{\ell}^j\in\{I,X,Y,Z\}$.

    \For{$k = 0, 1, \ldots$}

    \State \textbf{Step 1: Subspace Sampling}
    \State Uniformly sample index set $S_k \subseteq \{1, \dots, 4^N\}$ of size $d$.

    \vspace{0.4em}
    \State \textbf{Step 2: Gradient Estimation on Subspace} \Comment{$-\widetilde{\grad} f\left(U_k\right)$}
    \For{$j \in S_k$}
    \State Define $g_k^j(x)=\Tr \{O e^{i x P^j / 2} \psi_k e^{-i x P^j / 2}\}$.
    \State Set $\omega_k^j = \frac{1}{2^N} [ g_k^j(-\frac{\pi}{2}) - g_k^j(\frac{\pi}{2})]$.  \Comment{By parameter-shift.}
    \EndFor

    \vspace{0.4em}
    \State \textbf{Step 3: Step Size Selection}
    \State Choose a step size $t_k$. \Comment{Fixed step (e.g., $0.1$) or exact line-search.}
    \State Append update gates
    $|\psi_{k+1}\rangle \leftarrow  \prod_{j \in S_k} \exp (i \omega_k^j  P^j t_k)   |\psi_{k} \rangle $.
    \State Set $k \gets k + 1$.
    \EndFor
  \end{algorithmic}
\end{algorithm}

The uniform random selection of a new subspace at each iteration is critical to preserve convergence of RRSGP.
For instance, the method in \cite{wiersema2023optimizing} employs a structure similar to \cref{alg-RRSGP} but restricts optimization to a fixed subspace (e.g., the one-local or two-local Pauli set); consequently, the theoretical convergence guarantees no longer hold.
More recently, \cite{pervez2025riemannian} improved upon this by adopting a randomized subspace of polynomial dimension, yielding an algorithm that is equivalent to \cref{alg-RRSGP}.
Additionally, variants corresponding to the subspace dimension $d=1$ have been investigated in \cite{malvetti2024randomized,magann2023randomized,mcmahon2025equating}.
However, we formalize this procedure within a retraction-based optimization framework. This perspective clarifies that a broad range of optimization techniques can be easily incorporated into \cref{alg-RRSGP} to enhance efficiency (e.g., the exact line-search rule detailed in the next subsection).
Moreover, within this framework, the convergence properties of the algorithm can be established by using standard results from Riemannian optimization theory.

\subsubsection{Exact line-search for RRSGP}\label{sec-exact-lr}

In a quantum implementation of \cref{alg-RRSGP}, running $T$ iterations yields a circuit with $T d$ Pauli-evolution gates, so each additional iteration directly increases circuit depth. Therefore, to keep the circuit shallow, it is crucial to reduce total $T$ by selecting step size $t_k$ carefully at every iteration, rather than using a constant is common in deep learning field.
We now introduce an exact line-search procedure for choosing the step size $t_k$ in Step 3 of \cref{alg-RRSGP}.
Given the projected gradient $\zeta_k$ in \cref{eq-zeta}, the exact line-search is defined as
\begin{align}
  t_k := \arg\min_{t>0}\, \phi(t)
   & = f\!\left(\Trt_{U_k}\!\left(t\zeta_k\right)\right)
  = f\!\left(\prod_{j\in S_k}\exp\!\left(i\omega_k^j P^j t\right)U_k\right) \nonumber \\
   & = \Tr\!\left\{ O\, V(t)\,\psi_k\, V(t)^{\dagger}\right\},
\end{align}
where $V(t):=\prod_{j\in S_k}\exp(i\omega_k^j P^j t)$.
This choice guarantees the maximal decrease of the cost along the projected search direction $\zeta_k$ at each iteration, thereby helping reduce the total number of iterations.
The trade-off is that each iteration now requires (approximately) solving an auxiliary one dimensional subproblem $\min_{t>0} \phi(t)$, which is equivalent to training a single parameter PQC. In our numerical experiments, we use the Adam optimizer~\cite{kingma2014adam} to solve this subproblem, and the results show that it can effectively reduce the iteration count $T$.

\section{Second-order geometry and algorithms}\label{sec-2}

So far, we focused on first-order Riemannian algorithms for solving problem \eqref{pro-manifold}.
A natural question is whether the second-order algorithms can also be implemented on quantum circuits.
It is well known that the Newton method \cite{dennis1996numerical} achieves quadratic convergence when initialized sufficiently close to the optimum.
Hence, applying a Riemannian Newton method \cite{fernandes2017superlinear,absil2009optimization} has the potential to reach the solution in far fewer iterations, thereby producing shallower circuits.
In this section, we discuss how to realize the Riemannian Newton method for solving problem \eqref{pro-manifold}.
The first requirement is the ability to compute the key second-order geometric object, namely, the Riemannian Hessian operator.

\subsection{Second-order geometry}

In the Euclidean setting, given a smooth function $f: \mathbb{R}^n \rightarrow \mathbb{R}$, the Hessian $\nabla^2 f(x)$ at $x \in \mathbb{R}^n$ is an $n \times n$ real symmetric matrix. Equivalently, it can be viewed as a self-adjoint linear operator on $\mathbb{R}^n$.
As a natural extension, the Riemannian Hessian of $f$ at a point on a manifold is a self-adjoint linear operator acting on the tangent space at that point.
The Riemannian Hessian characterizes the second-order variation of the cost function $f$ along tangent directions.
It provides the curvature information to construct second-order optimization algorithms \cite{absil2007trust,boumal2023introduction}, that can accelerate the convergence.
In the next proposition, we present an explicit expression for the Riemannian Hessian of our cost function in problem \eqref{pro-manifold}, whose proof is provided in \ref{app-hess-der}.

\begin{proposition}[Riemannian Hessian]
  \label{prop-hessian}
  For the cost function $f: \mathrm{U}(p) \rightarrow \mathbb{R}$ defined by $f(U)=\Tr \left\{O U \psi_0 U^{\dagger}\right\}$, the Riemannian Hessian of $f$ at $U\in \mathrm{U} (p)$ is the self-adjoint linear operator $\Hess f(U): T_U \to T_U$ given by
  \begin{equation}\label{eq:Hess}
    \Hess f(U)[\Omega U] = \tfrac{1}{2} \left( [O,[\Omega,\psi]] + [[O,\Omega],\psi] \right) U,
  \end{equation}
  where $\psi := U \psi_0 U^\dagger$ and $T_U  = \{ \Omega U : \Omega \in \mathfrak{u}(p)\}$. Identifying $T_U \simeq \mathfrak{u}(p)$ yields the associated operator $\widetilde{\operatorname{Hess}} f(U): \mathfrak{u}(p) \rightarrow \mathfrak{u}(p)$,
  \begin{equation}\label{eq:Hess-tilde}
    \widetilde{\operatorname{Hess}} f(U)[\Omega]=\tfrac{1}{2}([O,[\Omega, \psi]]+[[O, \Omega], \psi]),
  \end{equation}
  which is self-adjoint on the Lie algebra $\mathfrak{u} (p)$. Here, ``self-adjoint'' is understood with respect to the Frobenius inner product $\langle A, B\rangle=\Tr \left (A^{\dagger} B\right) .$
\end{proposition}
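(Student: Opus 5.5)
I will use the standard formula for the Riemannian Hessian on a Riemannian submanifold of a Euclidean space \cite[Corollary 5.16]{boumal2023introduction}. Write $\bar G(U):=\nabla f(U)=2OU\psi_0$ for the smooth extension of the gradient to $\C^{p\times p}$, and $G(U):=\mathcal P_U(\bar G(U))=\operatorname{Skew}(\bar G(U)U^\dagger)U$ for the extension of $\grad f$. Then for $\xi\in T_U$,
\begin{equation*}
\Hess f(U)[\xi]=\mathcal P_U\bigl(\mathrm{D}G(U)[\xi]\bigr).
\end{equation*}
One subtlety must be settled first. The paper's metric $\langle X,Y\rangle_U=\Tr(\widetilde X^\dagger\widetilde Y)$ coincides with $\Re\Tr(X^\dagger Y)$ for $X=\widetilde XU$, $Y=\widetilde YU$, since $U$ is unitary. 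So $\mathrm{U}(p)$ really is a Riemannian submanifold, and both \cref{eq-1530} and the projection formula \cref{eq-projection} apply.

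The next step is to differentiate along $\xi=\Omega U$ with $\Omega\in\mathfrak u(p)$. I will take the smooth extension $G(V)=\operatorname{Skew}(2OV\psi_0V^\dagger)V$, which is valid for all $V\in\C^{p\times p}$, and differentiate along the curve $V(t)=U+t\Omega U$.
\begin{itemize}
\item The inner matrix $M(V)=V\psi_0V^\dagger$ has derivative $\Omega U\psi_0U^\dagger+U\psi_0U^\dagger\Omega^\dagger=\Omega\psi-\psi\Omega=[\Omega,\psi]$, using $\Omega^\dagger=-\Omega$.
\item The product rule then gives
\begin{equation*}
\mathrm{D}G(U)[\Omega U]=\operatorname{Skew}\bigl(2O[\Omega,\psi]\bigr)U+[O,\psi]\,\Omega U .
\end{equation*}
\item Since $[\Omega,\psi]$ is skew-Hermitian, the identity \cref{eq-iden-skew} turns the first term into $[O,[\Omega,\psi]]U$.
\end{itemize}

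Now apply $\mathcal P_U(Z)=\operatorname{Skew}(ZU^\dagger)U$. The first term is already tangent, so it is unchanged. The second term becomes $\operatorname{Skew}([O,\psi]\Omega)U$. Both $[O,\psi]$ and $\Omega$ are skew-Hermitian, so
\begin{equation*}
\operatorname{Skew}([O,\psi]\Omega)=\tfrac12\bigl([O,\psi]\Omega-\Omega[O,\psi]\bigr)=\tfrac12\bigl[[O,\psi],\Omega\bigr].
\end{equation*}
This gives
\begin{equation*}
\Hess f(U)[\Omega U]=\Bigl([O,[\Omega,\psi]]+\tfrac12\bigl[[O,\psi],\Omega\bigr]\Bigr)U.
\end{equation*}

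The main obstacle is the algebra that brings this into the symmetric form \cref{eq:Hess}. The Jacobi identity gives $[[O,\psi],\Omega]=[O,[\psi,\Omega]]-[\psi,[O,\Omega]]=-[O,[\Omega,\psi]]+[[O,\Omega],\psi]$. Substituting this yields exactly $\tfrac12\bigl([O,[\Omega,\psi]]+[[O,\Omega],\psi]\bigr)U$. Writing $\Hess f(U)[\Omega U]=\widetilde{\Hess} f(U)[\Omega]\,U$ then gives \cref{eq:Hess-tilde}.

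It remains to check that $\widetilde{\Hess} f(U)$ maps $\mathfrak u(p)$ into itself and is self-adjoint.
\begin{itemize}
\item The commutator of two skew-Hermitian matrices is skew-Hermitian. The commutator of a Hermitian matrix with a skew-Hermitian matrix is Hermitian. So $[\Omega,\psi]$ is skew-Hermitian, $[O,[\Omega,\psi]]$ is again skew-Hermitian, and the same holds for $[[O,\Omega],\psi]$. Hence $\widetilde{\Hess} f(U)[\Omega]\in\mathfrak u(p)$.
\item For self-adjointness, note that on $\mathfrak u(p)$ we have $\langle A,B\rangle=-\Tr(AB)$. Using cyclicity, $\Tr([X,Y]Z)=\Tr(X[Y,Z])$, I will compute
\begin{equation*}
\Tr\bigl(\Lambda[O,[\Omega,\psi]]\bigr)=\Tr\bigl([\Lambda,O][\Omega,\psi]\bigr),\qquad
\Tr\bigl(\Lambda[[O,\Omega],\psi]\bigr)=\Tr\bigl([\psi,\Lambda][O,\Omega]\bigr).
\end{equation*}
The sum of these two is symmetric under swapping $\Lambda$ and $\Omega$: the first term for $(\Lambda,\Omega)$ equals the second term for $(\Omega,\Lambda)$, after reordering the commutators.
\end{itemize}
Self-adjointness is also guaranteed abstractly, since the Riemannian Hessian is always self-adjoint. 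The direct check above serves as a consistency test of the signs in the computation.
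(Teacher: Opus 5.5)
Your proof is correct and follows the paper's route essentially step for step: Boumal's Corollary 5.16 applied to the extension $[O,V\psi_0V^\dagger]V$ (which is what your $\Skew(2OV\psi_0V^\dagger)V$ equals, since $V\psi_0V^\dagger$ is Hermitian for every $V$), the product rule, projecting the $[O,\psi]\Omega$ term via $\Skew(AB)=\tfrac12[A,B]$ for skew-Hermitian $A,B$, and then the Jacobi identity; you also carry out the self-adjointness check that the paper omits as straightforward. There is one slip to fix: $[\Omega,\psi]$ is Hermitian, not skew-Hermitian, since $[\Omega,\psi]^\dagger=[\psi,\Omega^\dagger]=[\Omega,\psi]$ --- this is exactly what justifies $\Skew(2O[\Omega,\psi])=[O,[\Omega,\psi]]$ (a Hermitian factor times a skew-Hermitian factor would give the anticommutator instead), and it is why $[O,[\Omega,\psi]]$ and $[[O,\Omega],\psi]$ are skew-Hermitian, namely as commutators of two Hermitian matrices.
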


Indeed, for any $U\in \mathrm{U} (p)$ and $X \in \mathbb{C}^{p \times p}$, the Riemannian Hessian $\widetilde{\operatorname{Hess}} f(U)$ satisfies
$\widetilde{\operatorname{Hess}} f(U)\left[X^{\dagger}\right]=(\widetilde{\operatorname{Hess}} f(U)[X])^{\dagger}$ and  $\widetilde{\operatorname{Hess}} f(U)[i\, X]=i\, \widetilde{\operatorname{Hess}} f(U)[X]$.
Its linearity and self-adjointness are straightforward to verify. Proofs of these properties can be found in \ref{app-hess-pro}.

\subsection{Second-order algorithms}

In this section, we first introduce a standard Riemannian Newton method and then propose a random subspace variant for the practical implementation.

\subsubsection{Riemannian Newton}

The Riemannian Newton method also fits within the general update framework $U_{k+1}=\mathrm{R}_{U_k}\left(t_k \eta_k\right)$ in \cref{general-update}. The key distinction is that the search direction $\eta_k = \widetilde{\eta} _k U_k \in T_{U_k}$ is now chosen as the Newton direction.

\paragraph{Riemannian Newton equation}
At the iteration $k$, the Newton direction $\widetilde{\eta} _k  = \Omega_k^{\text{N}} \in \mathfrak{u} (p)$ is defined as the solution to the following (Riemannian) Newton equation:
\begin{equation}\label{eq-Newton1}
  \operatorname{Hess} f (U_k)[\Omega_k^{\text{N}} \,U_k] = - \grad f (U_k),
\end{equation}
which constitutes a linear system on the current tangent space $T_{U_k}$. Dropping the rightmost factor $U_k$, the above equation can be equivalently written on the Lie algebra $\mathfrak{u} (p)$ as
\begin{equation}\label{eq-Newton2}
  \widetilde{\operatorname{Hess}} f (U_k)[\Omega_k^{\text{N}}] =- \widetilde{\grad} f(U_k),
\end{equation}
For convenience, let $\psi_k =U_k \psi_0 U_k^{\dagger}$ and define
\begin{equation}
  \mathcal{L}_k(\Omega) \triangleq \widetilde{\operatorname{Hess}} f (U_k)[\Omega]
  =\tfrac{1}{2}\left([O, [\Omega, \psi_k]] +[[O, \Omega], \psi_k]\right).
\end{equation}
Then the Newton direction $\widetilde{\eta} _k = \Omega_k^{\text{N}} \in \mathfrak{u} (p)$ is obtained by solving
\begin{equation}\label{eq-2005}
  \mathcal{L}_k (\Omega_k^{\text{N}})=\left[\psi_k,O\right] .
\end{equation}
The Riemannian Newton update is given by $U_{k+1}=\mathrm{R}_{U_k}(\Omega_k^{\text{N}}U_k)$ with unit step size $t_k\equiv 1$.
We emphasize that the use of a constant unit step size is important for achieving quadratic convergence.

\paragraph{Solving Newton equation}

To solve the Newton equation \cref{eq-2005}, we consider the matrix representation $\mathbf{L}_{k} \in \mathbb{R}^{4^N \times 4^N}$ of the operator $\mathcal{L}_k$ with respect to the orthogonal basis $\{i P^j\}$ of $\mathfrak{u}(p)$. Specifically, for all $r,s=1,\ldots,4^N$, we have
\begin{align}
  (\mathbf{L}_{k})_{rs}
   & = \langle iP^r, \mathcal{L}_k(i P^s)\rangle= \Tr\{ (iP^r)^\dagger \mathcal{L}_k(i P^s)\}= \Tr\{P^r \mathcal{L}_k\left(P^s\right)\}                 \\
   & =\tfrac{1}{2}\Tr\{P^r\left[O, \left[P^s, \psi_k\right]\right]\}+\tfrac{1}{2}\Tr \{P^r\left[\left[O, P^s\right], \psi_k\right]\}                    \\
   & =\tfrac{1}{2}\Tr\{\psi_k\left[\left[P^r,O\right], P^s\right]\}+\tfrac{1}{2}\Tr\{\psi_k\left[\left[P^s, O \right], P^r\right]\}. \label{eq-L-entry}
\end{align}
Under the same basis, let $\mathbf{g}_{k} \in \mathbb{R}^{4^N}$ denote the vector representation of the right-hand side of \cref{eq-2005}, i.e., negative Riemannian gradient.
For $j=1,\ldots,4^N$, its entries are given by
\begin{align}\label{eq-2019}
  \left(\mathbf{g}_k\right)_j
   & = \langle iP^j, -\widetilde{\operatorname{grad}} f\left(U_k\right) \rangle \\
   & = \langle iP^j, \left[\psi_k,O\right] \rangle=\Tr
  \{\left(iP^j\right)^{\dagger} \left[\psi_k, O\right]  \}
  =-i  \Tr \{\psi_k \left[O, P^j\right]\}.
\end{align}
We then classically solve the linear equation
\begin{equation}
  \mathbf{L}_k \mathbf{\Omega}_k^{\text{N}}=\mathbf{g}_k,
\end{equation}
to obtain $\mathbf{\Omega}_k^{\text{N}} \in \mathbb{R}^{4^N}$. The corresponding Newton direction in $\mathfrak{u}(p)$ is recovered as
\begin{equation}
  \Omega_k^{\text{N}} =\sum_{j=1}^{4^N}(\mathbf{\Omega}_k^{\text{N}})_j \cdot iP^j \in \mathfrak{u}(p).
\end{equation}
Finally, by applying the Trotter retraction $\mathrm{R} = \Trt$ in \cref{eq-trt}, the Riemannian Newton update takes the form
\begin{equation}
  U_{k+1}=\operatorname{Trt}_{U_k}(\Omega_k^{\text{N}}U_k)=\left(\prod_{j=1}^{4^N} \exp \left(i (\mathbf{\Omega}_k^{\text{N}})_j P^j \right)\right) U_k.
\end{equation}

\paragraph{Evaluation of coefficients}
The remaining task involves the computation of $\mathbf{g}_{k}$ and $\mathbf{L}_k$.
The evaluation of the entries $\left(\mathbf{g}_k\right)_j$ in \cref{eq-2019} follows the same procedure as that used for computing the gradient coefficients in \cref{alg-RRSGP}.
We consider the following univariate PQC cost function:
\begin{equation}\label{eq-1701}
  g_k^j (x) :=  \Tr \{O U (x)\psi_k U (x)^\dagger \}, \quad U (x) = e^{i x P^j / 2}.
\end{equation}
By \cref{lem-pqc-1}, we obtain
\begin{equation}\label{eq-1708}
  \left(\mathbf{g}_k\right)_j=\left[ g_k^j \left(-\frac{\pi}{2} \right) -  g_k^j \left(\frac{\pi}{2} \right)\right].
\end{equation}

Moreover, we establish the following key lemma, which shows that the matrix representation of the Hessian operator can also be obtained through quantum circuit measurements, in complete analogy with the gradient. The proofs are deferred to \ref{app-proofs}.

\begin{lemma}[Hessian coefficient estimation]\label{lem-vqa-2}
  Given any pure density operator $\psi$, any Hamiltonian $O$, and any Hermitian generators $P$ and $Q$, consider the expectation value on a variational circuit w.r.t. two real parameters $x,y$, i.e.,
  \begin{equation}
    g (x, y)=\Tr\left\{ O W (y) U (x)\psi U (x)^{\dagger} W (y)^{\dagger}\right\},
  \end{equation}
  where $U (x)=e^{i x P / 2}$ and $W (y)=e^{i y Q / 2}$. Then, we obtain the following results concerning the second-order partial derivatives:
  \begin{gather}
    g_{x x}(0,0)
    =\frac{1}{4} \Tr\bigl\{\psi[[P,O], P]\bigr\}, \quad
    g_{x y}(0,0) =\frac{1}{4} \Tr\bigl\{\psi[[Q, O], P]\bigr\}.
  \end{gather}
  Moreover, if $P^2=Q^2=I$, then by the parameter-shift rule \cite[Eq. (11) $\&$ Eq. (13)]{mari2021estimating}, one has
  \begin{align}
    g_{x x}(0,0)
     & =\frac{1}{2}\left[g\left(\frac{\pi}{2},0\right)+g\left(-\frac{\pi}{2},0\right)-2 g (0, 0)\right], \label{eq-1711}                                                                                                            \\
    g_{x y}(0,0)
     & =\frac{1}{4}\left[g\left(\frac{\pi}{2}, \frac{\pi}{2}\right)-g\left(\frac{\pi}{2}, -\frac{\pi}{2}\right)-g\left(-\frac{\pi}{2}, \frac{\pi}{2}\right)+g\left(-\frac{\pi}{2}, -\frac{\pi}{2}\right)\right] .   \label{eq-1712}
  \end{align}
  Note that if we aim to compute $\Tr\bigl\{ \psi [[P,O ], Q] \bigr\}$, it suffices to swap the order of the parameterized unitaries $U (x)$ and $W (y)$ in circuits.

\end{lemma}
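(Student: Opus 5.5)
We prove the derivative formulas by direct differentiation of the conjugation action. Write $\rho(x,y) := W(y)U(x)\psi U(x)^\dagger W(y)^\dagger$, so that $g(x,y)=\Tr\{O\rho(x,y)\}$.

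For the $x$-dependence, use $\frac{d}{dx}\bigl(U(x) A U(x)^\dagger\bigr) = \tfrac{i}{2}U(x)[P,A]U(x)^\dagger$, since $P$ commutes with $U(x)$. Differentiating twice in $x$ at $x=y=0$ gives
\begin{equation*}
g_{xx}(0,0)=\bigl(\tfrac{i}{2}\bigr)^2\Tr\{O[P,[P,\psi]]\}=-\tfrac14\Tr\{O[P,[P,\psi]]\}.
\end{equation*}
Applying the cyclic identity $\Tr([A,B]C)=\Tr(A[B,C])$ (already used for \cref{eq-1823}) twice moves the commutators onto $O$:
\begin{equation*}
\Tr\{O[P,[P,\psi]]\}=\Tr\{[O,P][P,\psi]\}=\Tr\{[[O,P],P]\psi\}.
\end{equation*}
Hence $g_{xx}(0,0)=-\tfrac14\Tr\{\psi[[O,P],P]\}=\tfrac14\Tr\{\psi[[P,O],P]\}$, which is the claimed formula.

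For the mixed derivative, differentiate first in $x$ (inner) and then in $y$ (outer). This gives
\begin{equation*}
g_{xy}(0,0)=\bigl(\tfrac{i}{2}\bigr)^2\Tr\{O[Q,[P,\psi]]\}.
\end{equation*}
The same two cyclic moves turn this into $-\tfrac14\Tr\{[[O,Q],P]\psi\}=\tfrac14\Tr\{\psi[[Q,O],P]\}$. The ordering remark at the end of the lemma follows by the same computation: interchanging the roles of $U$ and $W$ puts $P$ in the outer commutator and $Q$ in the inner one.

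For the parameter-shift identities we assume $P^2=Q^2=I$. Then $e^{ixP/2}=\cos(x/2)I+i\sin(x/2)P$. Substituting this into $\rho$ shows that $g$ is a trigonometric polynomial of the form
\begin{equation*}
g(x,y)=\sum a_{\alpha\beta}\, c_\alpha(x)\, c_\beta(y),\qquad c_\alpha,c_\beta\in\{1,\cos x,\sin x\}.
\end{equation*}
This holds because the products of $\cos(x/2)$ and $\sin(x/2)$ appearing in the expansion reduce to $1$, $\cos x$ and $\sin x$.

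For any function $h(x)=a+b\cos x+c\sin x$ one checks directly that
\begin{equation*}
h''(0)=-b=\tfrac12\bigl[h(\pi/2)+h(-\pi/2)-2h(0)\bigr],\qquad h'(0)=c=\tfrac12\bigl[h(\pi/2)-h(-\pi/2)\bigr].
\end{equation*}
This is exactly the single-variable parameter-shift rule of \cref{lem-pqc-1}. Applying the second identity with $y=0$ fixed yields \cref{eq-1711}. Applying the first-derivative shift rule in $x$ and then in $y$ (the function is separable in this sense) yields \cref{eq-1712}, with the product factor $\tfrac12\cdot\tfrac12=\tfrac14$. Alternatively, one can cite \cite{mari2021estimating} directly.

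The only real difficulty is bookkeeping of signs and commutator orderings in the cyclic-trace manipulations, especially keeping track of which generator sits inside which commutator for the mixed term.
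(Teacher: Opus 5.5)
Your proposal is correct and follows essentially the paper's argument. Both compute the derivatives directly from $\frac{d}{dx}e^{ixP/2}=\tfrac{i}{2}Pe^{ixP/2}$. The only difference in that part is the picture: you differentiate the state $WU\psi U^\dagger W^\dagger$ (Schr\"odinger picture), so you need two cyclic-trace moves to reach $\Tr\{\psi[[O,Q],P]\}$. The paper differentiates the conjugated observable $U^\dagger W^\dagger O W U$ (Heisenberg picture, $y$ first, then $x$ with $Z=W^\dagger[O,Q]W$ held fixed), so the nested commutator with $O$ innermost appears immediately.

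Your derivation of the shift rules from the form $a+b\cos x+c\sin x$ is a valid self-contained replacement for the paper's citation of Mari et al. One small correction: \cref{lem-pqc-1} contains only the first-derivative rule, so the second-derivative identity is your own check rather than part of that lemma.
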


To estimate the coefficients $(\mathbf{L}_{k})_{rs}$ in \cref{eq-L-entry}, we consider the following bivariate PQC cost functions:
\begin{align}
   & g_{k}^{rs} (x,y) := \Tr\left\{ O W (y) U (x)\psi_k  U (x)^{\dagger} W (y)^{\dagger}\right\}, \\
   & \qquad U (x) = e^{i x P^s / 2}, \quad W (y) = e^{i y P^r / 2},
\end{align}
and $g_{k}^{sr} (x,y) $, defined analogously by exchanging the roles of $P^r$ and $P^s$. Since both $P^r$, $P^s$ are Pauli words, applying \cref{eq-1712} of \cref{lem-vqa-2} yields
\begin{align}
  (\mathbf{L}_{k})_{rs}
       & =\tfrac{1}{2}\Tr\left\{\psi_k\left[\left[P^r, O \right], P^s\right]\right\}+\tfrac{1}{2}\Tr\left\{\psi_k\left[\left[ P^s,O\right], P^r\right]\right\}                                                                                                             \\
       & = \tfrac{1}{2}\left[g_{k}^{rs}\left(\frac{\pi}{2}, \frac{\pi}{2}\right)-g_{k}^{rs}\left(\frac{\pi}{2}, -\frac{\pi}{2}\right)-g_{k}^{rs}\left(-\frac{\pi}{2}, \frac{\pi}{2}\right)+g_{k}^{rs}\left(-\frac{\pi}{2}, -\frac{\pi}{2}\right)\right] \notag             \\
  \, + & \, \tfrac{1}{2}\left[g_{k}^{sr}\left(\frac{\pi}{2}, \frac{\pi}{2}\right)-g_{k}^{sr}\left(\frac{\pi}{2}, -\frac{\pi}{2}\right)-g_{k}^{sr}\left(-\frac{\pi}{2}, \frac{\pi}{2}\right)+g_{k}^{sr}\left(-\frac{\pi}{2}, -\frac{\pi}{2}\right)\right].  \label{eq-2032}
\end{align}
When $P^r$ and $P^s$ commute (which occurs for half of all Pauli-word pairs), we have $g_{k}^{r s}(x, y) \equiv g_{s r}^{(k)}(x, y),$ and the expression above simplifies to
\begin{equation}\label{eq-21023}
  \left(\mathbf{L}_k\right)_{r s}=\left[g_{k}^{r s}\left(\frac{\pi}{2}, \frac{\pi}{2}\right)-g_{k}^{r s}\left(\frac{\pi}{2},-\frac{\pi}{2}\right)-g_{k}^{r s}\left(-\frac{\pi}{2}, \frac{\pi}{2}\right)+g_{k}^{r s}\left(-\frac{\pi}{2},-\frac{\pi}{2}\right)\right] .
\end{equation}

In particular, for the diagonal entries $(\mathbf{L}_{k})_{jj}
  = \Tr\{\psi_k[[P^j,O], P^j]\}$, we consider the same PQC function $g_{k}^{j} (x) $ defined in \cref{eq-1701}. Applying \cref{eq-1711} of \cref{lem-vqa-2} yields
\begin{equation}
  (\mathbf{L}_{k})_{jj}
  =2 \left[g_{k}^{j} \left(\frac{\pi}{2}\right)  +g_{k}^{j} \left(-\frac{\pi}{2}\right)-g_{k}^{j} \left(0\right)\right].
\end{equation}
Note that $g_{k}^{j}(0)$ corresponds to the current cost value $f(U_k)$ and is independent of $P^j$; moreover, the values $g_{k}^{j}\left(\pm \frac{\pi}{2}\right)$ have already been computed in \cref{eq-1708}. Therefore, the calculation of diagonal components does not require any additional costs.
At this point, we have completed all the components required to implement the standard Riemannian Newton method on quantum circuits.

\subsubsection{Riemannian Random Subspace Newton (RRSN)}

From the perspective of quantum circuit implementation, Riemannian gradient descent and Riemannian Newton differ only in how they compute the coefficients (i.e., angles) of the gates. In terms of convergence rate, the gradient descent method is at best linearly convergent, whereas the Newton method enjoys the well-known quadratic convergence, meaning it can often reach highly accurate solutions in far fewer iterations.

However, the standard Newton method faces a major computational bottleneck in terms of quantum circuit evaluation cost. For an $N$-qubit system, forming the full Hessian matrix $\mathbf{L}_k \in \mathbb{R}^{4^N \times 4^N}$ requires on the order of $O (16^N )$ evaluations. To overcome it, we propose the Riemannian Random Subspace Newton (RRSN) method. Its core idea is essentially the same as that of the previously introduced RRSGP in \cref{sec-rrsgp}: at each iteration, we randomly select only $d=\operatorname{poly} (N)$ Pauli words out of the full set of $4^N$, while keeping the rest of the procedure unchanged. The RRSN algorithm is summarized in \cref{alg-RRSN}.
Under this random subspace restriction, the Newton equation
\begin{equation}\label{eq-565989}
  \mathbf{L}_k \boldsymbol{\Omega}_k^{\mathrm{N}}=\mathbf{g}_k \in \R^d
\end{equation}
reduces to a $d \times d$ linear system. Uniformly sampling $d$ Pauli basis considerably lowers the problem dimension, yet still preserves convergence in expectation, following the same rationale as RRSGP.

In \cref{alg-RRSN}, we additionally incorporate two standard engineering techniques for Newton methods: (Step 4) Hessian regularization via a positive definite modification and (Step 5) an Armijo backtracking line-search.

\paragraph{Hessian regularization}

In the preceding discussion, we implicitly assumed that the Newton equation \cref{eq-565989} admits a solution. However, this need not hold in general. Therefore, we introduce a regularization term $\delta_k \mathbf{I}_d$ to ensure that the Hessian matrix $\mathbf{L}_k \in \mathbb{R}^{d \times d}$ becomes positive definite, and hence invertible. Specifically, we set
\begin{equation}
  \delta_k=\max \left\{0, \rho-\lambda_{\min }\left (\mathbf{L}_k\right)\right\},
\end{equation}
for a small $\rho>0$, and $\lambda_{\text {min }}\left (\mathbf{L}_k\right)$ denotes the smallest eigenvalue of $\mathbf{L}_k$. The modified Hessian is then
\begin{equation}
  \widetilde{\mathbf{L}}_k: =\mathbf{L}_k+\delta_k \mathbf{I}_d \succeq \rho \mathbf{I}_d \succ 0.
\end{equation}
Consequently, the (regularized) Newton equation $\widetilde{\mathbf{L}}_k \boldsymbol{\Omega}_k^{\mathrm{N}}=\mathbf{g}_k$ admits a unique solution $\boldsymbol{\Omega}_k^{\mathrm{N}} = \widetilde{\mathbf{L}}_k^{-1} \mathbf{g}_k \in \mathbb{R}^d$. The corresponding Newton direction in the chosen random subspace is then recovered as
\begin{equation}\label{eq-4496}
  \Omega_k^{\mathrm{N}}: =\sum_{j \in S_k}  (\boldsymbol \Omega_k^{\mathrm{N}} )_j \cdot i P^j \in \operatorname{span}_{\mathbb{R}}\left\{i P^j: j \in S_k\right\} \subseteq \mathfrak{u} (p).
\end{equation}
Moreover, in this case, Newton direction $\Omega_k^{\mathrm{N}} U_k \in T_{U_k}$ is a descent direction, i.e., it forms an obtuse angle with the Riemannian gradient in the tangent space $T_{U_k}$. Indeed, by \cref{eq-4496,eq-2019},
\begin{align}
   & \left\langle\operatorname{grad} f\left(U_k\right), \Omega_k^{\mathrm{N}}U_k \right\rangle_{U_k}
  =\left\langle \widetilde{\operatorname{grad}} f(U_k), \Omega_k^{\mathrm{N}}\right\rangle
  =\sum_{j \in S_k}  (\boldsymbol \Omega_k^{\mathrm{N}} )_j\left\langle \widetilde{\operatorname{grad}} f(U_k),   i P^j\right\rangle                                                                                                        \\
   & =-\sum_{j\in S_k} (\boldsymbol{\Omega}_k^{\mathrm{N}} )_j \left(\mathbf{g}_k\right)_j =-\mathbf{g}_k^\top \mathbf{\Omega}_k^{\text{N}}= -\mathbf{g}_k^\top \widetilde{\mathbf{L}}_k^{-1} \mathbf{g}_k <0, \label{eq-descent-direction}
\end{align}
where the last inequality holds whenever $\mathbf{g}_k \neq 0$.

\begin{remark}
  If $\mathbf{g}_k \approx 0$, i.e., $\left (\mathbf{g}_k\right)_j=\langle i P^j, -\widetilde{\operatorname{grad}} f\left (U_k\right) \rangle \approx 0$, then the components of the Riemannian gradient along the sampled subspace are nearly zero. This situation can arise for two distinct reasons. First, the iterate may already be sufficiently close to an optimum, in which case the full Riemannian gradient itself is close to zero and the algorithm should terminate. Second, the randomly chosen subspace may be uninformative, so that the projection nearly annihilates the gradient even though the full gradient remains non-negligible; in this case, the subspace should be resampled. This phenomenon is especially pronounced when $d$ is very small (e.g., $d=1$), since the subspace then captures only a limited amount of gradient information.
\end{remark}

\paragraph{Armijo backtracking line-search}

Finally, we perform Armijo backtracking along the Newton direction. We first take the full step $t_k \leftarrow 1$ and check whether the following \textit{Armijo condition} holds:
\begin{equation}\label{eq-armijo-cond}
  f\left (\operatorname{Trt}_{U_k}\left (t_k \Omega_k^{\mathrm{N}} U_k\right)\right) \leq f\left (U_k\right)+c \, t_k\left\langle\operatorname{grad} f\left (U_k\right), \Omega_k^{\mathrm{N}} U_k\right\rangle .
\end{equation}
If not, we shrink $t_k \leftarrow \beta t_k$ (typically $\beta=0.5, c=10^{-4}$) and repeat the test until the above is satisfied. Since $\Omega_k^N U_k$ is a descent direction shown in \cref{eq-descent-direction}, this procedure terminates after finitely many loops; see \cite[Lemma 3.1]{nocedal1999numerical}.

In general, the Newton method enjoys quadratic convergence only when the initial iterate $U_0$ is sufficiently close to an optimum $U^*$; with a poor initialization, the standard Newton method may fail to converge. To improve robustness, we enforce the Armijo condition \cref{eq-armijo-cond} (also called \textit{sufficient decrease} condition), which promotes the convergence from an arbitrary initial $U_0$.
Numerical experiments indicate that the Armijo condition \cref{eq-armijo-cond} is often already satisfied at $t_k=1$, so this loop does not consume excessive quantum resources in practice. This technique is merely an optional safeguard.

\begin{algorithm}[H]
  \caption{Riemannian Random Subspace Newton (RRSN) Method for \eqref{pro-manifold}}
  \label{alg-RRSN}
  \begin{algorithmic}[1]
    \Require
    $N$-qubits system, Hamiltonian $O$, initial state $\psi_0$,  subspace dimension $d = \poly(N)$. Pauli set $\mathcal{P}^N=\{P^j\}_{j=1}^{4^N}$ with $P^j=\bigotimes_{\ell=1}^N \sigma_{\ell}^j$, $\sigma_{\ell}^j\in\{I,X,Y,Z\}$. Let $f_0=\Tr\{O\psi_0\}.$ Set $ \rho = 10^{-1}$, $c=10^{-4}$, $\beta=0.5$.

    \For{$k = 0, 1, \ldots$}
    \State \textbf{Step 1: Subspace Sampling}
    \State Uniformly sample index set $S_k = \{j_1, \dots, j_d\} \subseteq \{1, \dots, 4^N\}$.

    \vspace{0.4em}
    \State \textbf{Step 2: Gradient Estimation $\mathbf{g}_k \in \mathbb{R}^d$} \Comment{$-\widetilde{\grad} f\left(U_k\right)$}
    \For{$\alpha = 1, \dots, d$}
    \State Let $j \leftarrow j_\alpha$ and define $g_k^{j}(x)=\Tr\{Oe^{i x P^{j} / 2}\psi_k e^{-i x P^{j} / 2}\}$.
    \State Set $(\mathbf{g}_k)_\alpha = [ g_k^j(-\frac{\pi}{2}) - g_k^j(\frac{\pi}{2}) ]$.  \Comment{By parameter-shift.}
    \EndFor

    \vspace{0.4em}
    \State \textbf{Step 3: Hessian Estimation $\mathbf{L}_k\in \mathbb{R}^{d \times d}$} \Comment{$\mathcal{L}_k$}
    \For{$\alpha = 1, \dots, d$}
    \For{$\beta = \alpha, \dots, d$}
    \State Let $r \leftarrow j_\alpha$ and $s \leftarrow j_\beta$.
    \If{$\alpha = \beta$} \Comment{Diagonal: reuse values from Line 7.}
    \State Set $(\mathbf{L}_k)_{\alpha \alpha} = 2 \left[ g_k^r(\frac{\pi}{2}) + g_k^r(-\frac{\pi}{2}) - 2f_k \right]$.
    \Else \Comment{Off-diagonal via parameter-shift.}
    \State Define $g_k^{rs}(x,y)=\Tr\{O e^{i y P^{r} / 2}e^{i x P^{s} / 2} \psi_k   e^{-i x P^{s} / 2} e^{-i y P^{r} / 2}\}$.
    \State Let $\Delta(g) := g(\frac{\pi}{2}, \frac{\pi}{2}) - g(\frac{\pi}{2}, -\frac{\pi}{2}) - g(-\frac{\pi}{2}, \frac{\pi}{2}) + g(-\frac{\pi}{2}, -\frac{\pi}{2})$.
    \If{$[P^r, P^s] = 0$}  \Comment{50\% probability.}
    \State Set $(\mathbf{L}_k)_{\alpha\beta} = \Delta(g_k^{rs})$.
    \Else
    \State Set $(\mathbf{L}_k)_{\alpha\beta} =\tfrac{1}{2} [\Delta(g_k^{rs}) + \Delta(g_k^{sr})]$.
    \EndIf
    \EndIf
    \State Set $(\mathbf{L}_k)_{\beta\alpha} \leftarrow (\mathbf{L}_k)_{\alpha\beta}$.
    \EndFor
    \EndFor

    \vspace{0.4em}
    \State \textbf{Step 4: Hessian Regularization \& Solve Newton equation}
    \State Set $\delta_k \leftarrow \max\{0, \rho - \lambda_{\min}(\mathbf{L}_k)\}$. \Comment{Positive definite modification.}
    \State Solve $(\mathbf{L}_k + \delta_k \mathbf{I}_d) \,\mathbf{\Omega}_k^{\text{N}} = \mathbf{g}_k$ for Newton direction $\mathbf{\Omega}_k^{\text{N}} \in \mathbb{R}^d$.

    \vspace{0.4em}
    \State \textbf{Step 5: Armijo Backtracking Line-Search (Optional)}
    \For{$t_k \in \{1, \beta, \beta^2, \ldots\}$}
    \State Form trial state $|\psi_{\text{trial}}\rangle \leftarrow  \prod_{\alpha=1}^d \exp(i (\mathbf{\Omega}_k^{\text{N}})_\alpha P^{j_\alpha}t_k )  |\psi_{k}\rangle$.
    \State Set $f_{\text{new}} \leftarrow \Tr \{O \psi_{\text{trial}}\}$.
    \If{$f_{\text{new}} \leq f_k - c\, t_k \,\mathbf{g}_k^\top \mathbf{\Omega}_k^{\text{N}}$}
    \State \textbf{break}
    \EndIf
    \EndFor
    \State Set $\psi_{k+1} \leftarrow \psi_{\text{trial}}$, $f_{k+1} \leftarrow f_{\text{new}}$, $k \leftarrow k+1$.
    \EndFor
  \end{algorithmic}
\end{algorithm}

\subsubsection{RRSN when subspace dimension \texorpdfstring{$d=1$}{d=1}}\label{sec-resource}

The circuit measurement per-iteration cost of RRSGP (\cref{alg-RRSGP}) scales as $O(d)$, whereas that of RRSN (\cref{alg-RRSN}) scales as $O(d^2)$.
RRSGP is often used with subspace dimension $d=1$, a setting already adopted in prior works \cite{malvetti2024randomized,magann2023randomized,mcmahon2025equating}.
In this case, each RRSGP iteration uniformly samples a single index $j_k\in\{1,\ldots,4^N\}$ and defines
\begin{equation}
  g_k(x)=\operatorname{Tr}\!\left\{O\, e^{i x P^{j_k}/2}\, \psi_k\, e^{-i x P^{j_k}/2}\right\}.
\end{equation}
The Riemannian gradient is then projected onto the one-dimensional direction $iP^{j_k}$ using two function evaluations $g_k\!\left(\pm \frac{\pi}{2}\right)$.
However, when RRSN is also configured with $d=1$ (see \cref{alg-RRSN-d1} for intuition), the required circuit resources reduce to the same two evaluations $g_k\!\left(\pm \frac{\pi}{2}\right)$.

\begin{algorithm}[H]
  \caption{RRSN of \cref{alg-RRSN} with subspace dimension $d=1$ for \eqref{pro-manifold}}
  \label{alg-RRSN-d1}
  \begin{algorithmic}[1]
    \Require $N$-qubits system, Hamiltonian $O$, initial state $\psi_0$. Pauli set $\mathcal{P}^N=\{P^j\}_{j=1}^{4^N}$.
    Set $f_0=\Tr\{O\psi_0\}$, $\rho=10^{-1}$, $c=10^{-4}$, $\beta=0.5$.

    \For{$k = 0,1,\ldots$}
    \State \textbf{(1) Sample one direction:} uniformly sample $j_k\in\{1,\ldots,4^N\}$.
    \State Define $g_k(x)=\Tr\{O e^{i x P^{j_k}/2}\psi_k e^{-i x P^{j_k}/2}\}$.

    \State \textbf{(2) Gradient (scalar):} $\mathbf{g}_k \leftarrow g_k(-\tfrac{\pi}{2}) - g_k(\tfrac{\pi}{2})$.
    \State \textbf{(3) Hessian (scalar):} $\mathbf{L}_k \leftarrow 2 [g_k(\tfrac{\pi}{2}) + g_k(-\tfrac{\pi}{2}) - 2 f_k ]$.

    \State \textbf{(4) Regularize \& Newton step:}
    $\delta_k \leftarrow \max\{0,\rho - \mathbf{L}_k\}$, $\omega_k \leftarrow \frac{\mathbf{g}_k}{\mathbf{L}_k+\delta_k}$.
    \Statex \hspace{\algorithmicindent} Equivalently, $\omega_k \leftarrow \frac{\mathbf{g}_k}{\max\{\mathbf{L}_k,\rho\}}$.

    \State \textbf{(5) Armijo backtracking (Optional):}
    \For{$t \in \{1,\beta,\beta^2,\ldots\}$}
    \State $|\psi_{\text{trial}}\rangle \leftarrow \exp\!\big(i\,\omega_k P^{j_k} t\big)\,|\psi_k\rangle$.
    \State $f_{\text{new}} \leftarrow \Tr\{O\psi_{\text{trial}}\}$.
    \If{$f_{\text{new}} \le f_k - c\, t\, g_k\,\omega_k$}
    \State \textbf{break}
    \EndIf
    \EndFor
    \State Set $\psi_{k+1} \leftarrow \psi_{\text{trial}}$, $f_{k+1} \leftarrow f_{\text{new}}$, $k \leftarrow k+1$.
    \EndFor
  \end{algorithmic}
\end{algorithm}

For $d=1$, RSSGP performs a one-dimensional gradient step along a random direction, while RRSN performs a one-dimensional Newton step along the same random direction. The difference is a scalar curvature correction, namely, whether one divides by the curvature (see (3) in \cref{alg-RRSN-d1})
\begin{equation}
  \mathbf{L}_k: =\left\langle i P^{j_k}, \mathcal{L}_k\left (i P^{j_k}\right)\right\rangle=2\left[g_k\left (\frac{\pi}{2}\right)+g_k\left (-\frac{\pi}{2}\right)-2 f_k\right].
\end{equation}
In summary, when $d=1$, the first-order method RSSGP can be upgraded to the second-order method RRSN at no additional cost. In the numerical experiment of \cref{sec-comparsion-d1} and \Cref{fig:combined_exp_6}, we observe that RRSN is more efficient than RSSGP when $d=1$.

\section{Numerical experiments}\label{sec-experiments}

In this section, we present numerical experiments to evaluate our proposed Riemannian algorithms, RRSGP (\cref{alg-RRSGP}) and RRSN (\cref{alg-RRSN}), for model \eqref{pro-manifold}, and compare them with the VQA algorithm for model \eqref{pro-pqc}.
Quantum circuits are implemented using PennyLane \cite{bergholm2018pennylane}.
We consider the Heisenberg XXZ Hamiltonian for an $N$-qubit system, defined as
\begin{equation}
  O = \sum_{i=1}^{N} X_i X_{i+1} + \sum_{i=1}^{N} Y_i Y_{i+1} + \Delta \sum_{i=1}^{N} Z_i Z_{i+1},
\end{equation}
where $\Delta = 0.5$ and periodic boundary conditions are imposed, e.g., $X_{N+1} \equiv X_1$.
Let $f_{\mathrm{ground}}$ denote the true ground state energy of $O$.
As performance metrics, we report the \textit{Energy error} $\lvert f_k - f_{\mathrm{ground}} \rvert$ (where $f_k$ denotes the cost value at the $k$-th iteration) and the (Riemannian) \textit{Gradient norm} $\|[\psi_k, O]\|_{F}$. The algorithm terminates when either of the following criteria is met: (i) the gradient norm falls below $10^{-9}$; or (ii) the relative change in energy satisfies $\lvert f_{k+1}-f_k\rvert / \lvert f_k\rvert < 10^{-10}$.
The VQA employs a two-layer Hardware-Efficient Ansatz (HEA), as shown in \Cref{fig:vqa_circuit}; the Adam optimizer is used with a learning rate of 0.01, and all parameters are initialized to zero.

\begin{figure}
  \centering
  \includegraphics[width=1\linewidth]{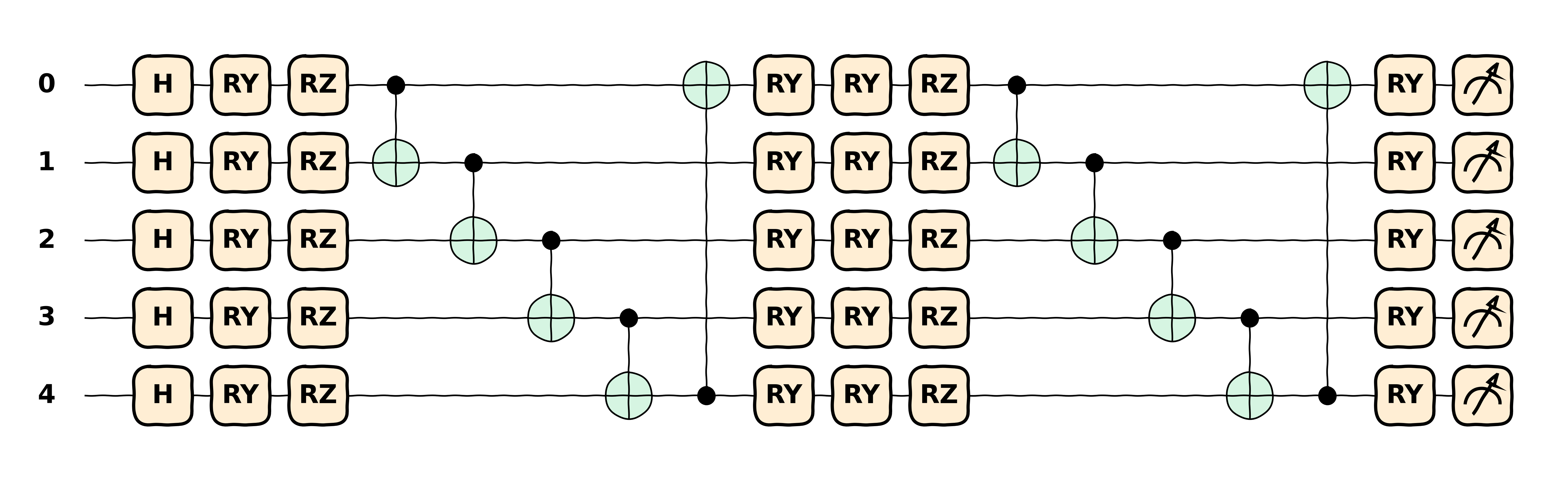}
  \caption{The circuit of the VQA algorithm used in our experiments. It consists of a two-layer hardware-efficient ansatz, where each $R_Y$ and $R_Z$ gate is parameterized by a single free parameter.}
  \label{fig:vqa_circuit}
\end{figure}

\subsection{Setting I: Idealized setting (without random subspaces)}

We first consider an idealized setting in which no random subspace is used, so all $4^N$ Pauli words are selected. In this case, RRSGP reduces to the standard Riemannian gradient descent, while RRSN corresponds to the standard Riemannian Newton method. Under this setting, the algorithms have full access to complete information about both the Riemannian gradients and Hessians.

\subsubsection{Comparison of typical convergence rates}

\Cref{fig:combined_exp_1} illustrates typical convergence behaviors of four algorithms: RRSGP (fixed), which uses a fixed step size of 0.1; RRSGP (exact), which uses an exact line-search discussed in \cref{sec-exact-lr} (implemented via the PennyLane built-in Adam optimizer with a maximum of 30 iterations, learning rate 0.1); RRSN; and VQA. All methods start from the uniform superposition state.

As seen in (a) of \Cref{fig:combined_exp_1}, RRSGP with exact line-search exhibits faster convergence than the fixed-step version. Although the curves of RRSGP (exact) and RRSN appear to overlap in (a), their differences become apparent on the logarithmic scale in (b): RRSN shows a typical quadratic convergence rate, whereas RRSGPs (regardless of the step-size strategy) exhibit only linear convergence. This distinction is further confirmed by the gradient norm behavior in (c) and (d).

Note that RRSN is implemented as a modified Newton method: a regularization term is added to ensure that the Hessian is positive definite, and an Armijo backtracking line-search is employed. As a result, the energy error decreases monotonically. In contrast, the norm of the Riemannian gradient is not necessarily monotone, as shown in panel (c), which is consistent with typical optimization behavior.

For the VQA baseline, even after 500 iterations, the final energy error remains 1.14, indicating that it fails to converge to the true solution. This may be due to the limited expressivity of the parametrized quantum state \cite{sim2019expressibility,holmes2022connecting}, or to optimization difficulties such as the barren plateau phenomenon \cite{mcclean2018bp,larocca2025barren}.

\subsubsection{VQA warm starts accelerate Riemannian algorithms}

As Riemannian algorithms, the circuit depth of both RRSGPs and RRSN increases monotonically with the number of iterations. To mitigate this issue, we adopt a two-stage strategy: we first run a low-cost VQA on a fixed shallow circuit to quickly bring the quantum state close to the optimum, and then switch to a Riemannian algorithm for further refinement. In particular, if the VQA warm start places the state within the quadratic convergence region, RRSN can converge in only a few iterations.

Here, we consider the 5-qubit XXZ Hamiltonian and take the state obtained after 200 VQA iterations as the initial state. \Cref{fig:combined_exp_2} compares RRSGP and RRSN with and without this VQA warm start. RRSGP uses a fixed step size of 0.1.
As shown in panel (a), RRSGP without a VQA warm start (i.e., RRSGP (uniform-init)) is attracted to a saddle point during the iterations, corresponding to an intermediate energy level of the Hamiltonian. Although it eventually escapes, this stagnation wastes computational resources.
By contrast, the curves of RRSGP (VQA-init) show that the VQA warm start effectively steers RRSGP away from this saddle point.
Meanwhile, RRSN is more robust: even from a cold start, it can bypass the saddle point directly.

Panels (b) and (d) of \Cref{fig:combined_exp_2} further show that warm started RRSN allows it to enter the quadratic convergence region earlier, thereby reducing the total circuit depth required to reach the target accuracy. In summary, for our Riemannian algorithms, using VQA as a warm start is always beneficial.

\subsection{Setting II: Practical setting (with random subspaces)}

In the previous subsection, the experiments were conducted under an idealized setting, where the random subspace dimension $d=4^N$. We now turn to the practical case with $d<4^N$. We continue to study the $N=4$ qubits XXZ Hamiltonian. Our goal is to investigate how the choice of the subspace dimension $d$ affects the performance of three algorithms: RSSGP (fixed) with a fixed step size of 0.1, RSSGP (exact) with exact line-search, and RRSN. Motivated by the preceding experiments, we employ a two-layer VQA warm start for all methods to improve their efficiency.

\subsubsection{RRSN is more robust to the dimension of random subspaces}

In \Cref{fig:combined_vqa_exp_4}, we set the subspace dimension to $d=1 \ (0.4 \%), d=4 \ (1.6 \%), d=8 \ (3.1 \%), d=16 \ (6.2 \%), d=32 \ (12.5 \%), d=64 \ (25.0 \%), d=128 \ (50.0 \%)$, and $d=256 \ (100.0 \%)$.
The numbers in parentheses report $d / 4^N$ (with $4^N=256$), i.e., the fraction of full gradient/Hessian information accessed per iteration. For each choice of $d$, we conduct 20 independent runs and plot the averaged curves.

In \Cref{fig:combined_vqa_exp_4}, we observe that for all algorithms the convergence rate gradually deteriorates as $d$ decreases, indicating a strong dependence on the subspace dimension.
The benefit, however, is a significant reduction in per-iteration resource cost, since the circuit measurement complexity of RSSGP scales as $O (d)$, whereas that of RRSN scales as $O (d^2)$.
Moreover, from the linear scale plots in panels (a), (c), and (e), using a small $d$ (even $d=1$) does not appear drastically worse in the early iterations. However, in the later stage it becomes difficult to attain high-precision solution.

Panel (f) illustrates that, as $d$ decreases from the full dimension 256 to 1, the convergence behavior of RRSN degrades from quadratic convergence to superlinear convergence, and eventually to linear convergence.
Interestingly, when $d \geq 64$, RRSN achieves performance that is nearly identical to the full case $d=256$, while the per-iteration circuit measurement cost at $d=64$ is only $0.0625=64^2 / 256^2$ of that at $d=256$. Moreover, for $d=32$, where RRSN still exhibits superlinear convergence, the per-iteration measurement cost is merely $0.015625=32^2 / 256^2$ of the full case.
By contrast, for RSSGPs, reducing $d$ from 256 to 32 leads to a significant degradation in convergence performance.
Hence, RRSN is more robust than RRSGP to the dimension of random subspaces.

\subsubsection{Comparison of Riemannian algorithms under low dimensional subspaces}

In \Cref{fig:combined_vqa_exp_3}, we compare RSSGP (fixed), RSSGP (exact), and RRSN under four subspace dimensions: $d=4 \ (1.6 \%), d=16 \ (6.2 \%), d=64 \ (25.0 \%)$, and $d=256 \ (100.0 \%)$.
The solid curves show the average over 20 independent runs, while the shaded bands indicate the $10 \%-90 \%$ quantile range, capturing the variability across runs.

In \Cref{fig:combined_vqa_exp_3}, we observe that as $d$ increases, the randomness diminishes and their trajectories eventually stabilize to deterministic curves.
For all values of $d$, RSSGP (exact) consistently outperforms RSSGP (fixed). When $d=4$ or $d=16$, RRSN degrades to linear convergence; nevertheless, it remains faster than both RSSGP variants, albeit with larger fluctuations across runs.

\subsubsection{Comparison of Riemannian algorithms under one dimensional subspaces}\label{sec-comparsion-d1}

We next investigate the extreme case $d=1$. As discussed in \cref{sec-resource}, in this regime both RSSGPs and RRSN update along a single random direction; the only difference is that RRSN additionally incorporates curvature information into the coefficient, without any extra cost. Although RRSN no longer enjoys quadratic convergence when $d=1$, we still expect it to outperform RSSGPs, which is indeed observed in practice.

We consider the XXZ Hamiltonian with the number of qubits $N \in \{2, 3, 4, 5\}$. Fixing $d=1$, we run RSSGP (fixed), RSSGP (exact), and RRSN (\cref{alg-RRSN-d1}) for 10 independent runs and report the average total number of iterations (for ease of illustration, the energy error tolerance is set to $10^{-5}$) in \Cref{fig:combined_exp_6}. Since $d=1$, the circuit depth is equal to the iteration count. \Cref{fig:combined_exp_6} shows how the iteration counts of the three algorithms scale with $N$. It can be seen that RRSN is significantly faster than RSSGPs.

\begin{figure}
  \centering
  \includegraphics[width=1.0\linewidth]{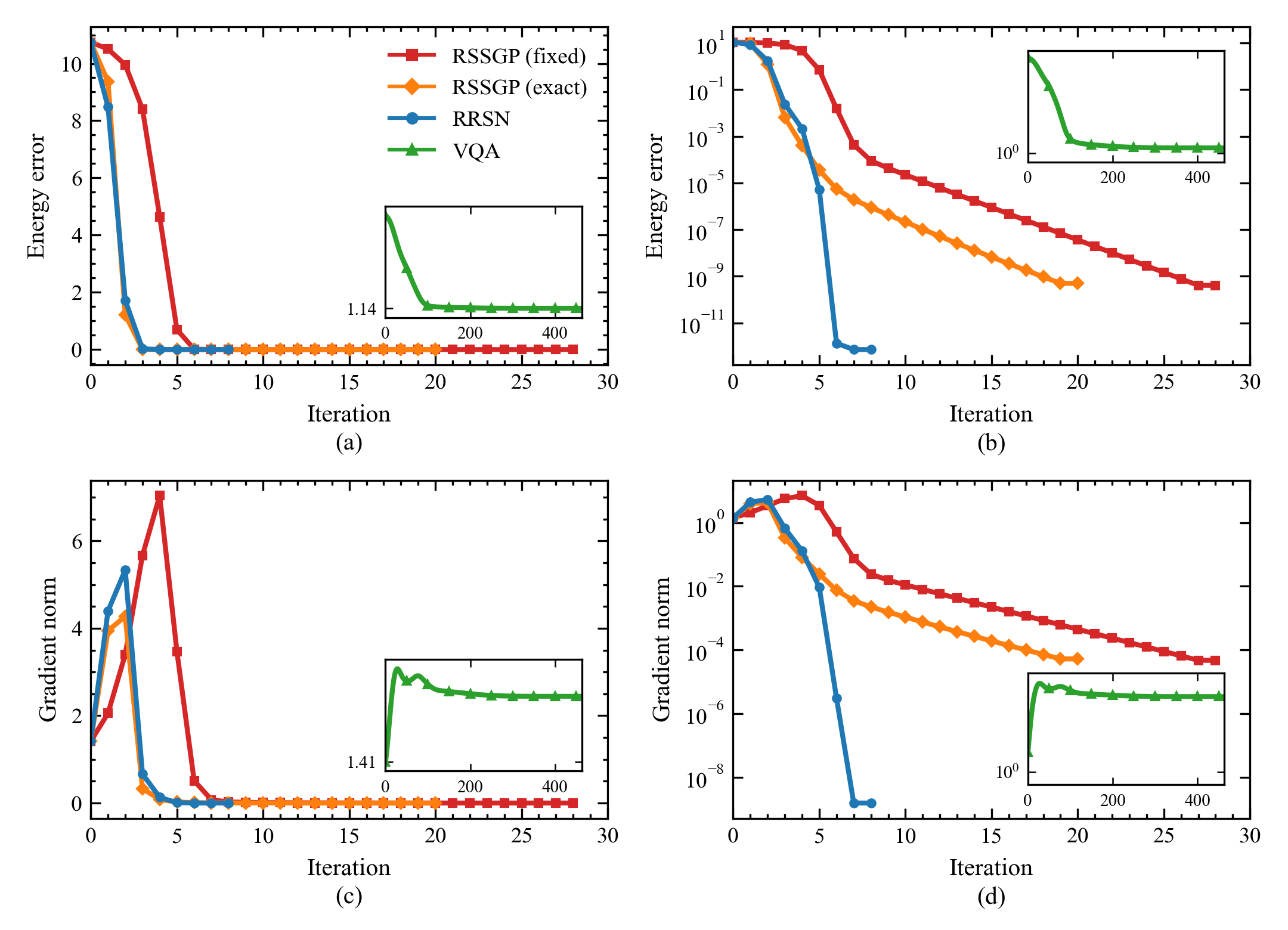}
  \caption{Comparison of typical convergence rate without considering random subspaces. Experiments are conducted on a 4-qubit XXZ, starting from the uniform state. Panels (a) and (b) show the energy error versus iteration number on linear and logarithmic scales, respectively, while (c) and (d) display the evolution of the gradient norm. Although exact line-search accelerates the initial convergence of RRSGP, only RRSN exhibits quadratic convergence, whereas RRSGPs achieve at best linear convergence.}
  \label{fig:combined_exp_1}
\end{figure}

\begin{figure}
  \centering
  \includegraphics[width=1.0\linewidth]{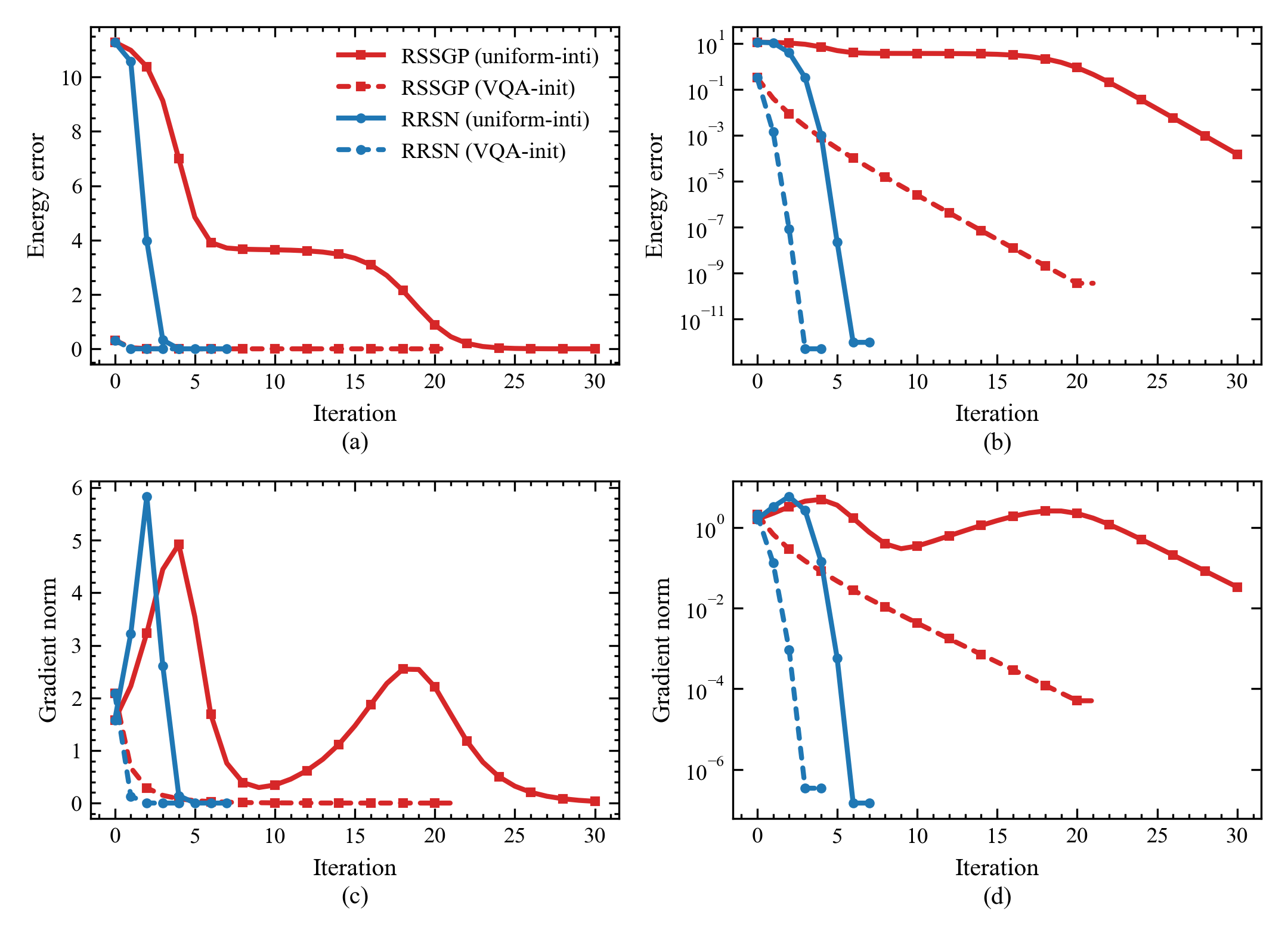}
  \caption{Impact of VQA warm start on performance for RRSGP and RRSN. Experiments are performed on a 5-qubit XXZ Hamiltonian. Solid curves correspond to cold starts from the uniform state, while dashed curves indicate VQA warm starts. Panels (a) and (c) show that cold start RRSGP is prone to being trapped near a saddle point, whereas the VQA warm start avoids this issue. RRSN, by contrast, bypasses the saddle point even with a cold start. Panels (b) and (d) further demonstrate that VQA warm start enables RRSN to enter the quadratic convergence region earlier.}
  \label{fig:combined_exp_2}
\end{figure}

\begin{figure}
  \centering
  \includegraphics[width=1.0\linewidth]{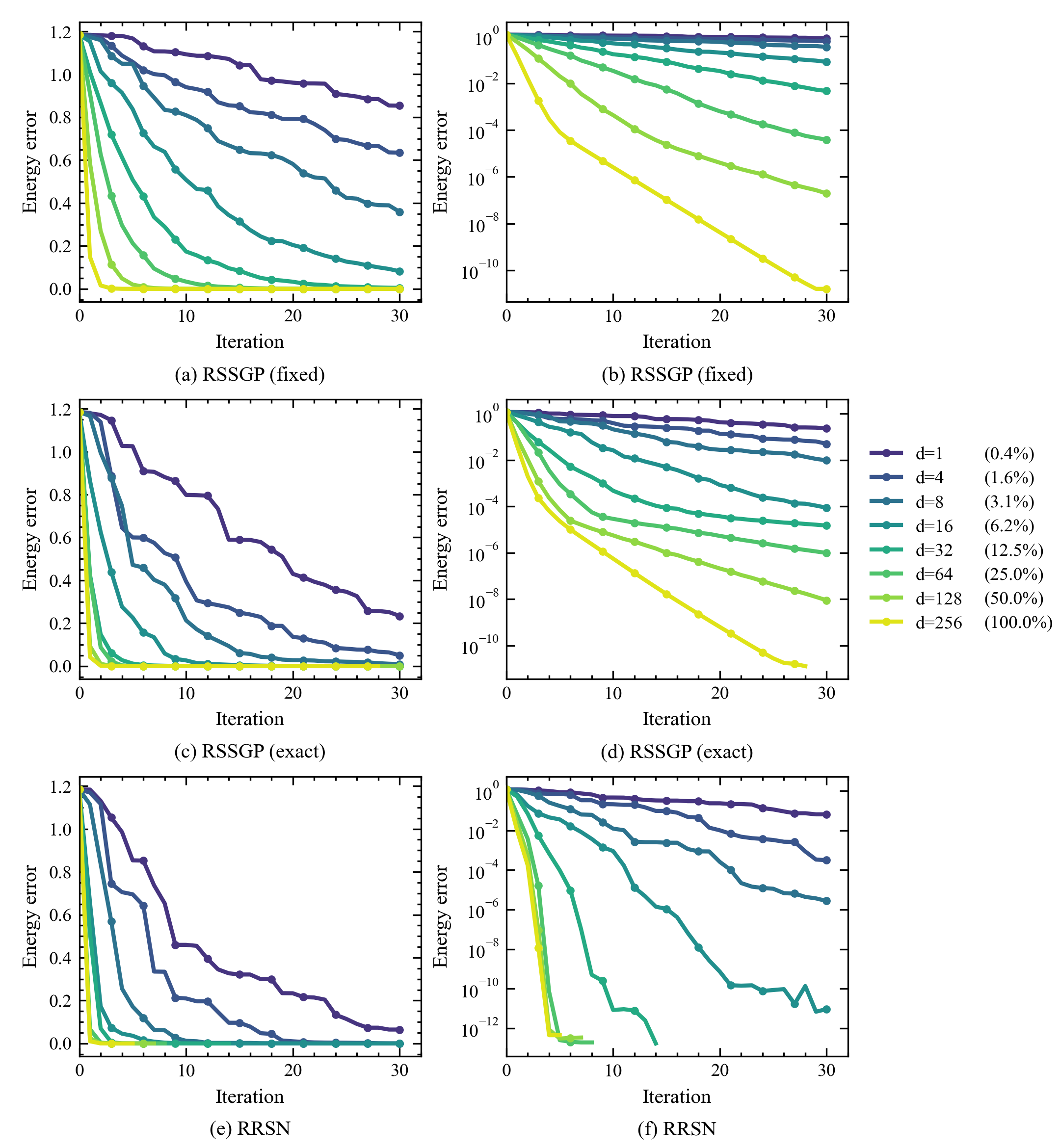}
  \caption{Convergence behavior of RSSGP (fixed), RSSGP (exact), and RRSN under practical random subspaces ($d<4^N$). Each curve is averaged over 20 independent runs. Panels (a) and (b) show the energy error of RSSGP (fixed) on linear and logarithmic scales, respectively; panels (c) and (d) show RSSGP (exact); and panels (e) and (f) show RRSN. As $d$ decreases, all methods slow down, while RRSN exhibits a gradual transition from quadratic to superlinear and then linear convergence; for $d=64$ or $128$, it achieves performance close to the full dimensional case. RRSN is more robust than RRSGP to the dimension of random subspaces.}
  \label{fig:combined_vqa_exp_4}
\end{figure}

\begin{figure}
  \centering
  \includegraphics[width=1.0\linewidth]{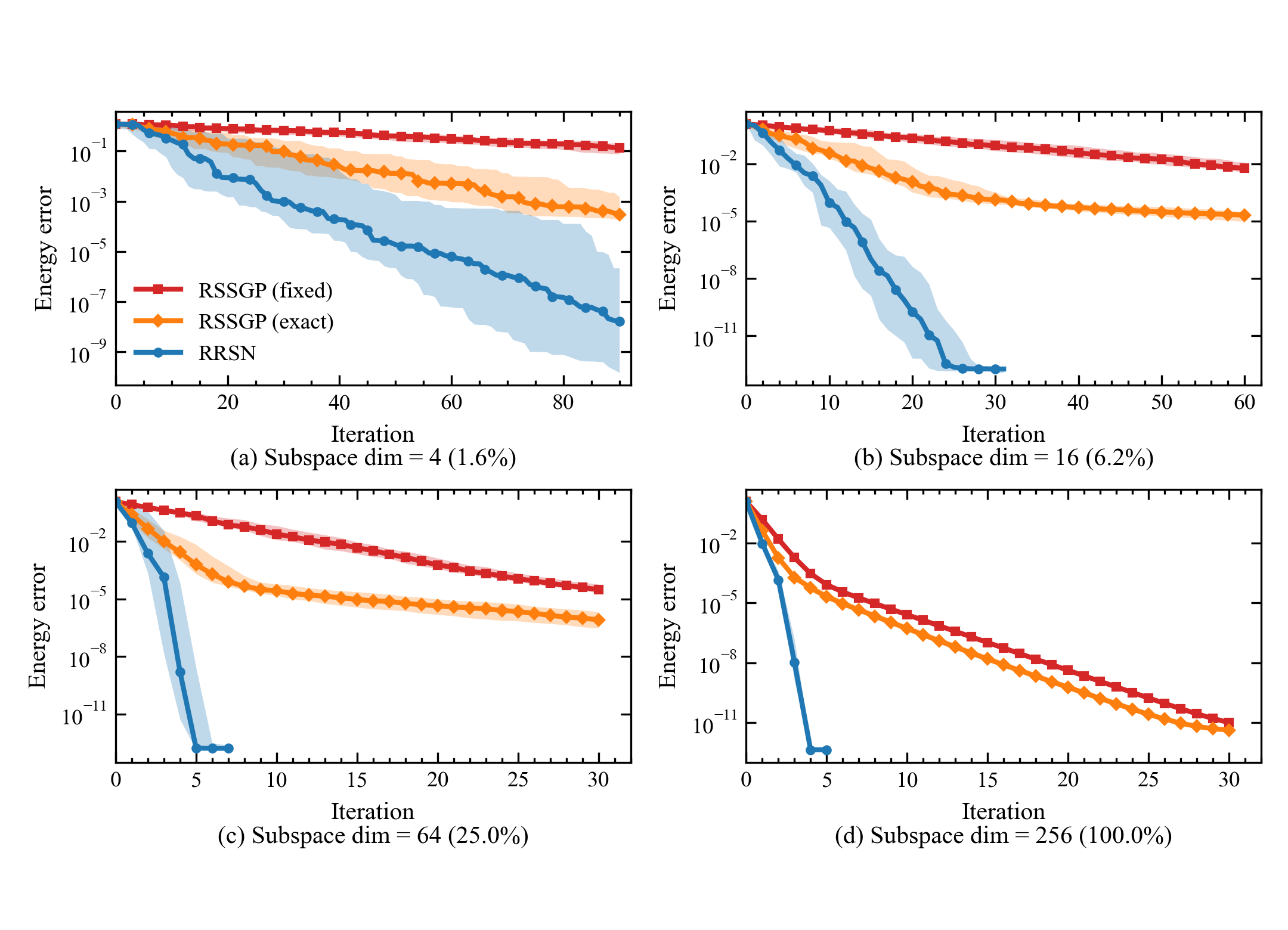}
  \caption{Comparison of RSSGP (fixed), RSSGP (exact), and RRSN under practical random subspaces ($d<4^N$). Curves show the average over 20 runs, and the shaded bands indicate the $10 \%-90 \%$ quantile ranges. As $d$ increases, the variability across runs decreases. RSSGP (exact) always outperforms RSSGP (fixed); RRSN remains faster overall even at small $d$.}
  \label{fig:combined_vqa_exp_3}
\end{figure}

\begin{figure}
  \centering
  \includegraphics[width=0.6\linewidth]{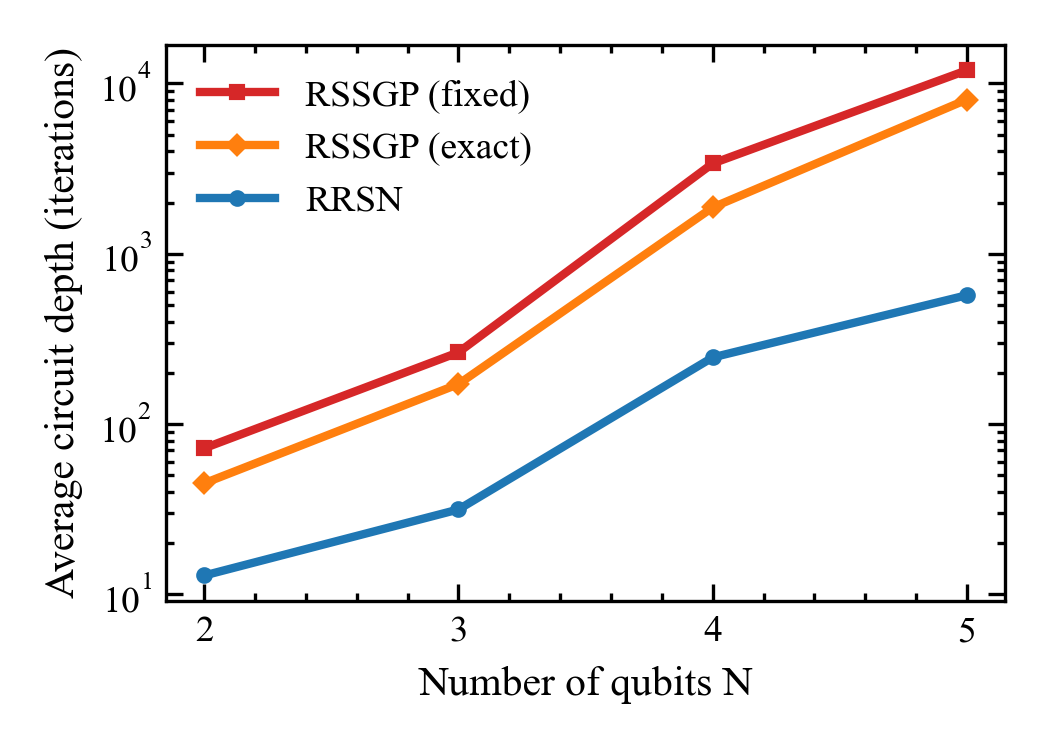}
  \caption{Comparison of RSSGP (fixed), RSSGP (exact), and RRSN under one-dimensional random subspaces ($d=1$). For the XXZ Hamiltonian, we consider qubit numbers $N=2,3,4,5$. Each point reports the average total iteration count over 10 runs required to reach an energy error below $10^{-5}$. The results show that although the per-iteration cost of the three algorithms is identical when $d=1$, RRSN consistently requires fewer iterations than RRSGPs.}
  \label{fig:combined_exp_6}
\end{figure}

\newpage

\section{Conclusion}\label{sec-discussion}

In this work, we present a systematic geometric approach to quantum circuit design by optimizing the energy cost directly over the unitary group $\mathrm{U}(p)$.
We develop a comprehensive \textit{retraction-based} Riemannian optimization framework, in which every algorithmic component (from the retraction map to the evaluation of geometric objects) is fully implementable on quantum hardware.
This framework bridges quantum circuit optimization with the mature field of Riemannian optimization \cite{absil2009optimization,boumal2023introduction}, enabling us to use established convergence theory and advanced algorithmic techniques in a rigorous manner for quantum computing.

By interpreting the Trotter approximation as a valid retraction, we provided a unified theoretical foundation for existing randomized RGD techniques \cite{pervez2025riemannian,malvetti2024randomized,magann2023randomized,mcmahon2025equating}, formalized here as the Riemannian Random Subspace Gradient Projection (RRSGP) method.
More importantly, we extended this framework to second-order algorithms.
We derived the explicit Riemannian Hessian for the energy cost and demonstrated, for the first time, that it can be estimated via parameter-shift rules using only quantum measurements.
Based on this, we proposed the Riemannian Random Subspace Newton (RRSN) method.

Our numerical experiments highlight the significant advantages of incorporating second-order information.
RRSN achieves a quadratic convergence rate, reaching high-precision solutions in substantially fewer iterations than first-order methods and standard VQA baselines.
On the other hand, RRSN exhibits strong robustness to the dimension of the random subspace.
Even in the extreme case of a one-dimensional subspace ($d=1$), where RRSN and RRSGP share the same per-iteration measurement cost, RRSN consistently outperforms the first-order approach by using curvature information to rescale the step size.

To address the challenges of the NISQ era, specifically circuit depth and measurement costs, we validated a hybrid strategy: using a shallow VQA as a ``warm start'' effectively avoids potential saddle points, or positions the system directly within the quadratic convergence region of the RRSN.
This approach strikes a balance between the hardware efficiency of fixed ansatzes and the high-accuracy refinement capabilities of Riemannian optimizations.

The adoption of a retraction-based Riemannian optimization framework for \eqref{pro-manifold} establishes a systematic foundation for applying a diverse suite of Riemannian algorithms to quantum circuit design.
While this work focuses on quantum-implementable Riemannian gradient descent and Riemannian Newton methods, the framework facilitates the potential extension to other advanced techniques, including Riemannian conjugate gradient \cite{abrudan2009conjugate,sato2022riemannian}, Riemannian trust-region \cite{absil2007trust,boumal2023introduction}, Riemannian quasi-Newton (e.g., BFGS) \cite{huang2015broyden,huang2018riemannian}, and Riemannian accelerated or adaptive schemes (e.g., Adam) \cite{zhang2018towards,becigneul2018riemannian}.
We expect that such geometric perspectives will play an increasingly important role in unlocking the potential of quantum algorithms.
In fact, in another work \cite{lai2025grover} by the authors, we were the first to obtain the quadratic speedup of Grover's quantum search algorithm using standard convergence analysis tools of Riemannian gradient descent.

\section*{Data availability}

The data and codes used in this study are available from public repositories\footnote{https://github.com/GALVINLAI/RiemannianQCD}.

\section*{Acknowledgments}
This work was supported by the National Natural Science Foundation of China under the grant numbers 12501419, 12288101 and 12331010, and the National Key R\&D Program of China under the grant number 2024YFA1012901.
DA acknowledges the support by the Quantum Science and Technology-National Science and Technology Major Project via Project 2024ZD0301900, and the Fundamental Research Funds for the Central Universities, Peking University.

\appendix

\section{Proofs}\label{app-proofs}

\begin{proof}[Proof of \cref{lem-pqc-1}]
  Recall that $g(x)=\Tr \left\{\psi U(x)^{\dagger}O U(x) \right\}$ with $U(x)=e^{i x P / 2}$.
  Since $\dot{U} (x) = \frac{d}{dx} U (x) = \frac{i}{2} P U (x)$ and $\dot{U} (x)^\dagger  = -\frac{i}{2} U (x)^\dagger P,$ we obtain
  \begin{equation}\label{eq-2014}
    \frac{d}{d x}\left(U(x)^{\dagger} O U(x)\right)=\dot{U}(x)^{\dagger} O U(x)+U(x)^{\dagger} O \dot{U}(x)=\frac{i}{2} U(x)^{\dagger}[O, P] U(x).
  \end{equation}
  Substituting back into the trace yields
  \begin{equation}
    g^{\prime}(x)=\Tr \left\{\psi \frac{d}{d x}\left(U(x)^{\dagger} O U(x)\right)\right\}=\frac{i}{2} \Tr \left\{\psi U(x)^{\dagger}[O, P] U(x)\right\}.
  \end{equation}
  In particular, evaluating at $x=0$ yields the desired result. We complete the proof.
\end{proof}

\begin{proof}[Proof of \cref{lem-vqa-2}]
  Recall that
  \begin{equation}
    g (x, y)=\Tr \left\{\psi U (x)^{\dagger} W (y)^{\dagger} O W (y) U (x)\right\}
  \end{equation}
  with $U(x)=e^{i x P / 2}$ and $W(y)=e^{i y Q / 2}$. For convenience, write $W = W (y)$ and $\dot W = \frac{d}{d y} W (y)$, and likewise $U = U (x)$ and $\dot U = \frac{d}{dx} U (x) $.
  We first differentiate $g (x, y)$ with respect to $y$, and then differentiate the resulting expression with respect to $x$.
  Since $\dot W=  \frac{i}{2} Q W$ and $\dot W^\dagger =-  \frac{i}{2}W^\dagger Q$, the same calculation as in \cref{eq-2014} yields
  \begin{equation}
    \frac{d}{d y}\left(W (y)^{\dagger} O W (y)\right)
    =\frac{i}{2} W (y)^{\dagger}[O, Q] W (y) .
  \end{equation}
  Hence,
  \begin{align}
    g_y(x,y)
     &
    =\Tr\Bigl\{\psi  U  ^\dagger
    \frac{d}{d y }\Bigl (W (y)^\dagger  O  W (y)\Bigr)  U  \Bigr\}                              \\
     & =\frac{i}{2}  \Tr\bigl\{\psi  U (x)^\dagger  W (y)^\dagger[O, Q]  W (y)  U (x)\bigr\}  .
  \end{align}
  Next, since $\dot U =  \frac{i}{2} P U$ and $\dot U^\dagger =-  \frac{i}{2} U^\dagger P$, the same computation as in \cref{eq-2014} shows that
  \begin{equation}
    \frac{d}{d x }\Bigl (U (x)^\dagger Z U (x)\Bigr)
    =\frac{i}{2} U (x)^\dagger[Z, P] U (x)
  \end{equation}
  for any constant $Z$ that does not depend on $x$. Therefore,
  \begin{align}
    g_{x y}(x,y)
     & =\frac{d  }{d  x   }  g_y (x,y)
    =\frac{d}{d x }\left[\frac{i}{2} \Tr\left\{\psi U(x)^{\dagger} W^{\dagger}[O, Q] W U(x)\right\}\right]            \\
     & =\frac{i}{2} \Tr\left\{\psi \frac{d}{d x }\left(U^{\dagger}  \left(W^{\dagger}[O, Q] W\right) U\right)\right\} \\
     & =\frac{i}{2} \Tr\left\{\psi\left(\frac{i}{2} U^{\dagger}\left[W^{\dagger}[O, Q] W, P\right] U\right)\right\}   \\
     & =-\frac{1}{4} \Tr\left\{\psi U (x)^\dagger\bigl[ W (y)^\dagger[O, Q] W (y), P\bigr] U (x)\right\},
  \end{align}
  where in the third line we treat $W^{\dagger}[O, Q] W$ as the constant operator $Z$ with respect to $x$. Evaluating at $(0,0)$, where $U(0)=W(0)=I$, yields
  \begin{equation}
    g_{x y}(0,0)
    =\frac{1}{4} \Tr\bigl\{\psi[[Q,O], P]\bigr\}.
  \end{equation}
  The remaining results can be obtained using analogous approaches. We complete the proof.
\end{proof}

\section{Details about Riemannian Hessian}\label{app-hess}

Extending the Hessian of a real-valued function defined on a Euclidean space to a real-valued function defined on a manifold gives rise to the Riemannian Hessian. For detailed introductions to the Riemannian Hessian, we refer the reader to textbooks \cite{absil2009optimization,boumal2023introduction}. Here, we directly adopt existing methodology and derive the Riemannian Hessian of our cost function $f$ in \eqref{pro-manifold} explicitly. We then discuss its basic properties.

\subsection{Derivation of Riemannian Hessian}\label{app-hess-der}

Recall that a linear operator $L: V \rightarrow V$ on an inner product space $(V,\langle\cdot, \cdot\rangle)$ is self-adjoint if $\langle L x, y\rangle=\langle x, L y\rangle$ for all $x, y \in V$.

\begin{proposition}[Riemannian Hessian]\label{prop-hessian-app}
  For the cost function $f: \mathrm{U}(p) \rightarrow \mathbb{R}$ defined by $f(U)=\Tr \left\{O U \psi_0 U^{\dagger}\right\}$, the Riemannian Hessian of $f$ at $U\in \mathrm{U} (p)$ is the self-adjoint linear operator $\Hess f(U): T_U \to T_U$ given by
  \begin{equation}\label{eq:Hess-app}
    \Hess f(U)[\Omega U] = \tfrac{1}{2} \left( [O,[\Omega,\psi]] + [[O,\Omega],\psi] \right) U,
  \end{equation}
  where $\psi := U \psi_0 U^\dagger$ and $T_U  = \{ \Omega U : \Omega \in \mathfrak{u}(p)\}$. Identifying $T_U \simeq \mathfrak{u}(p)$ yields the associated operator $\widetilde{\operatorname{Hess}} f(U): \mathfrak{u}(p) \rightarrow \mathfrak{u}(p)$,
  \begin{equation}\label{eq:Hess-tilde-app}
    \widetilde{\operatorname{Hess}} f(U)[\Omega]=\tfrac{1}{2}([O,[\Omega, \psi]]+[[O, \Omega], \psi]),
  \end{equation}
  which is self-adjoint on the Lie algebra $\mathfrak{u} (p)$. Here, ``self-adjoint'' is understood with respect to the Frobenius inner product $\langle A, B\rangle=\Tr \left (A^{\dagger} B\right) .$
\end{proposition}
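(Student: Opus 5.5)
Since $\mathrm{U}(p)$ is a Riemannian submanifold of $\mathbb{C}^{p\times p}$ with the real Frobenius metric, I would use the standard submanifold formula \cite[Corollary 5.16]{boumal2023introduction}:
\[
\Hess f(U)[\eta]=\mathcal{P}_U\bigl(\mathrm{D}\bar G(U)[\eta]\bigr),
\]
where $\bar G$ is any smooth extension of $\grad f$ to a neighborhood of $U$ in $\mathbb{C}^{p\times p}$. The natural extension comes from \cref{eq-grad}:
\[
\bar G(Z)=\mathcal{P}_Z\bigl(\nabla f(Z)\bigr)=\Skew\bigl(2OZ\psi_0Z^\dagger\bigr)Z=\bigl(OZ\psi_0Z^\dagger-Z\psi_0Z^\dagger O\bigr)Z .
\]
This is a polynomial in $Z$ and $Z^\dagger$, so it is smooth on the whole ambient space and agrees with $\grad f$ on $\mathrm{U}(p)$.

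Next I would differentiate $\bar G$ at $U$ in the direction $\eta=\Omega U$ with $\Omega\in\mathfrak{u}(p)$. Since $\Omega^\dagger=-\Omega$, we have $\dot Z=\Omega U$ and $\dot Z^\dagger=-U^\dagger\Omega$, so the state $Z\psi_0Z^\dagger$ varies as $\Omega\psi-\psi\Omega=[\Omega,\psi]$. The product rule then gives
\[
\mathrm{D}\bar G(U)[\Omega U]=[O,[\Omega,\psi]]\,U+[O,\psi]\,\Omega U .
\]

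I then project onto $T_U$ using $\mathcal{P}_U(Z)=\Skew(ZU^\dagger)U$. The first term is already skew-Hermitian, since a commutator of a Hermitian and a skew-Hermitian matrix is skew-Hermitian, so it survives unchanged. For the second term, $[O,\psi]$ and $\Omega$ are both skew-Hermitian, hence $\Skew([O,\psi]\Omega)=\tfrac12\bigl[[O,\psi],\Omega\bigr]$. Adding the two pieces yields
\[
\widetilde{\Hess} f(U)[\Omega]=[O,[\Omega,\psi]]+\tfrac12\bigl[[O,\psi],\Omega\bigr].
\]

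The main obstacle is matching this to the symmetric form stated in \cref{eq:Hess-tilde-app}. I would use the Jacobi identity
\[
\bigl[[O,\psi],\Omega\bigr]=\bigl[O,[\psi,\Omega]\bigr]-\bigl[\psi,[O,\Omega]\bigr]=-\bigl[O,[\Omega,\psi]\bigr]+\bigl[[O,\Omega],\psi\bigr].
\]
Substituting this in gives $\tfrac12\bigl([O,[\Omega,\psi]]+[[O,\Omega],\psi]\bigr)$, exactly as claimed. Multiplying back by $U$ gives \cref{eq:Hess-app}.

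Linearity in $\Omega$ is immediate. Self-adjointness follows from the general theory, but I would also check it directly using $\Tr(A[B,C])=\Tr([A,B]C)$. Writing $\langle A,\mathcal{L}(\Omega)\rangle=-\Tr\bigl(A\,\mathcal{L}(\Omega)\bigr)$ for skew-Hermitian $A$, cyclic rearrangement gives
\[
\langle A,\widetilde{\Hess} f(U)[\Omega]\rangle=-\tfrac12\Tr\bigl(\psi\bigl([[A,O],\Omega]+[[\Omega,O],A]\bigr)\bigr),
\]
an expression symmetric in $A$ and $\Omega$. This is the same symmetric structure already visible in \cref{eq-L-entry}.
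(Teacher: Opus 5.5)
Your proof is the paper's argument essentially verbatim: the same extension $[O,Z\psi_0Z^\dagger]Z$ of the gradient, the same differential $[O,[\Omega,\psi]]U+[O,\psi]\Omega U$, and the same projection via $\Skew(AB)=\tfrac12[A,B]$ for skew-Hermitian factors followed by the Jacobi identity, and your explicit trace check of self-adjointness supplies what the paper only calls ``straightforward.'' One correction: the commutator of a Hermitian and a skew-Hermitian matrix is Hermitian, not skew-Hermitian, so the correct reason $[O,[\Omega,\psi]]$ is skew-Hermitian is that $[\Omega,\psi]$ is Hermitian and the commutator of two Hermitian matrices is skew-Hermitian (your conclusion stands).
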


\begin{proof}[Proof of \cref{prop-hessian}]
  Here, we provide a detailed derivation, following \cite[Corollary 5.16]{boumal2023introduction}, by first differentiating a smooth extension of the Riemannian gradient  (see \cref{eq-grad}) $U \mapsto \grad f(U)=[O, U \psi_0 U^\dagger]U $ in the ambient space $\mathbb{C}^{p \times p} \supseteq \mathrm{U}(p)$ and then orthogonally projecting the result onto $T_U$.
  Specifically, consider the extension function $G: \mathbb{C}^{p \times p} \to \mathbb{C}^{p \times p}$ of $\grad f$ defined by
  \begin{equation}
    G (U) := A (U) U, \, \text{ with }  \,  A (U): =[ O, U \psi_0 U^\dagger].
  \end{equation}
  Using the usual differential $D A (U)[V]= \lim _{t \rightarrow 0} \tfrac{A(U+t V)-A(U)}{t}=\bigl[  O, V  \psi_0  U^\dagger   +  U  \psi_0  V^\dagger  \bigr]$ and applying the product rule, the differential of $G$ at $U$ in any direction $V\in \mathbb{C}^{p \times p}$ is given by
  \begin{align}
    D G (U)[V]
     & =\bigl (D A (U)[V]\bigr) U + A (U) V                         \\
     & = \bigl[ O, V \psi_0 U^\dagger + U \psi_0 V^\dagger \bigr] U
    + \bigl[ O, U \psi_0 U^\dagger \bigr] V.
  \end{align}
  Let $\psi: = U \psi_0 U^\dagger$ and restrict the direction $V$ to the subspace $T_U = \{ \Omega U: \Omega \in \mathfrak{u} (p) \}$. Then, we obtain $D G (U)[\Omega U] =[O, [\Omega, \psi]] U +[O, \psi] \Omega U \triangleq Z.$
  Next, projecting $Z$ onto $T_U$ via \cref{eq-projection} yields the Riemannian Hessian:
  \begin{equation}\label{eq-1609}
    \operatorname{Hess} f (U)[\Omega U]
    := \Skew (Z U^{\dagger}) U = \Skew  \{ \underbrace{[O, [\Omega, \psi]]}_{ (1)} + \underbrace{[O, \psi] \Omega}_{ (2)}  \} U.
  \end{equation}
  Note that term (1) is skew-Hermitian, while term (2) is a product of two skew-Hermitian matrices; thus, by \cref{eq-iden-skew}, \cref{eq-1609} reduces to
  \begin{align}
    \operatorname{Hess} f (U)[\Omega U] & = ([O, [\Omega, \psi]] + \frac{1}{2}[[O, \psi], \Omega] ) U           \\
                                        & =\frac{1}{2}\left([O, [\Omega, \psi]] +[[O, \Omega], \psi] \right) U,
  \end{align}
  where the last equality follows from the Jacobi identity $[O, [\Omega, \psi]]+[\Omega, [\psi, O]]+[\psi, [O, \Omega]]=0$.
  Finally, it is straightforward to verify that $\operatorname{Hess} f(U)$ is a self-adjoint linear map from $T_U$ to $T_U$; see \ref{app-hess-pro} for details.
\end{proof}

\subsection{Properties of Riemannian Hessian}\label{app-hess-pro}

In this subsection, we summarize basic properties of the Riemannian Hessian associated with our cost function. For convenience, we denote by $\mathcal{L}$ the tilde-form Hessian in \cref{eq:Hess-tilde-app} and, without loss of generality, drop the prefactor $\tfrac{1}{2}$.
Accordingly, we consider the linear map $\mathcal{L}:V \to V$ defined by
\begin{equation}\label{eq-linear-L}
  \mathcal{L} (X) =[ O, [ X, \psi]] + [[ O, X], \psi ],
\end{equation}
where $O$ is a given Hermitian matrix and $\psi=|\psi\rangle \langle \psi|$ is a density operator.
The vector space $V$ may be taken as the full space $\mathbb{C}^{p\times p}$, the Lie algebra $\mathfrak{u}(p)=\left\{\Omega \in \mathbb{C}^{p \times p}: \Omega^{\dagger}=-\Omega\right\}$, or the space of Hermitian matrices $\mathcal{H}(p)=\left\{H \in \mathbb{C}^{p \times p}: H^{\dagger}=H\right\}$. This is justified because $\mathcal{L}$ preserves the adjoint structure of its argument; see the next proposition.

\begin{proposition}[Adjoint-preservation]\label{prop-ad-preserve}
  Consider linear operator $\mathcal{L}$ in \cref{eq-linear-L}. For all $X \in \mathbb{C}^{p \times p}$, one has $\mathcal{L}(X^{\dagger})=\mathcal{L}(X)^{\dagger}$. Consequently,
  \begin{enumerate}
    \item $\mathcal{L}(X) \in \mathcal{H}(p)$ whenever $X \in \mathcal{H}(p)$;
    \item  $\mathcal{L}(X) \in \mathfrak{u}(p)$ whenever $X \in \mathfrak{u}(p)$.
  \end{enumerate}
\end{proposition}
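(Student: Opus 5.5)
The plan is to prove the identity $\mathcal{L}(X^{\dagger})=\mathcal{L}(X)^{\dagger}$ directly from the elementary rule for adjoints of commutators. The two consequences then follow by specializing $X$.

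First, for arbitrary $A,B\in\mathbb{C}^{p\times p}$ we have
\begin{equation}
  [A,B]^{\dagger}=(AB-BA)^{\dagger}=B^{\dagger}A^{\dagger}-A^{\dagger}B^{\dagger}=[B^{\dagger},A^{\dagger}]=-[A^{\dagger},B^{\dagger}].
\end{equation}
I will apply this twice to each term of $\mathcal{L}$ in \cref{eq-linear-L}, using that $O^{\dagger}=O$ and $\psi^{\dagger}=\psi$.

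For the first term, $[O,[X,\psi]]^{\dagger}=-[O,[X,\psi]^{\dagger}]$. Since $[X,\psi]^{\dagger}=-[X^{\dagger},\psi]$, this gives
\begin{equation}
  [O,[X,\psi]]^{\dagger}=[O,[X^{\dagger},\psi]].
\end{equation}
For the second term, $[[O,X],\psi]^{\dagger}=-[[O,X]^{\dagger},\psi]$. Since $[O,X]^{\dagger}=-[O,X^{\dagger}]$, this gives
\begin{equation}
  [[O,X],\psi]^{\dagger}=[[O,X^{\dagger}],\psi].
\end{equation}
Summing the two, and using that taking adjoints is additive, yields $\mathcal{L}(X)^{\dagger}=\mathcal{L}(X^{\dagger})$. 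The key point is that each term is a double commutator, so the two sign flips cancel.

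For the consequences:
\begin{enumerate}
  \item If $X^{\dagger}=X$, then $\mathcal{L}(X)^{\dagger}=\mathcal{L}(X^{\dagger})=\mathcal{L}(X)$, so $\mathcal{L}(X)\in\mathcal{H}(p)$.
  \item If $X^{\dagger}=-X$, then $\mathcal{L}(X)^{\dagger}=\mathcal{L}(-X)=-\mathcal{L}(X)$ by linearity of $\mathcal{L}$, so $\mathcal{L}(X)\in\mathfrak{u}(p)$.
\end{enumerate}
Linearity of $\mathcal{L}$ over $\mathbb{C}$ is immediate, since commutators are bilinear.

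I do not expect a real obstacle here. The only step needing care is the sign bookkeeping in the nested commutators; the conjugate-linearity of $\dagger$ plays no role, because no scalars are involved.
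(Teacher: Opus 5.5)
Your proof is correct and takes essentially the same route as the paper: both apply the commutator adjoint rule $[A,B]^{\dagger}=[B^{\dagger},A^{\dagger}]$ (which you write as $-[A^{\dagger},B^{\dagger}]$) to each term, using $O^{\dagger}=O$ and $\psi^{\dagger}=\psi$. The only differences are cosmetic: you track the two cancelling sign flips in each double commutator, whereas the paper reorders the brackets instead, and you also spell out the two consequences (using $\mathcal{L}(-X)=-\mathcal{L}(X)$ for the skew-Hermitian case), which the paper leaves implicit.
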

\begin{proof}
  We use the identities $[A,B]^{\dagger} = [B^{\dagger},A^{\dagger}]$, $O^{\dagger}=O$, and $\psi^{\dagger}=\psi$. Then, $\mathcal{L}(X^\dagger)
    = [O,[X^\dagger,\psi]] + [[O,X^\dagger],\psi]
    = [[\psi,X^\dagger],O] + [\psi,[X^\dagger,O]]
    = [[X,\psi]^\dagger,O] + [\psi,[O,X]^\dagger]
    = \mathcal{L}(X)^\dagger.$
\end{proof}

Next, we show the linearity of $\mathcal{L}$. While $\mathcal{L}$ is defined over complex matrices, its behavior over real subspaces is of particular interest for our Riemannian optimization, where the tangent space $T_U$ of the unitary group is a real vector space. The following result is easy to prove.

\begin{proposition}[Linearity]
  Consider linear operator $\mathcal{L}$ in \cref{eq-linear-L}.
  \begin{enumerate}
    \item $\mathcal{L}$ is $\C$-linear over complex vector space $\C^{p\times p}$. In particular, $\mathcal{L}(i P)=i \mathcal{L}(P)$ hold for any input $P$.
    \item $\mathcal{L}$ is $\R$-linear over real vector space $\C^{p\times p} \simeq \R^{2p\times 2p}$. In particular, $\mathcal{L}$ is $\mathbb{R}$-linear over real vector subspace $\mathfrak{u}(p)$ or $\mathcal{H}(p)$.
  \end{enumerate}
\end{proposition}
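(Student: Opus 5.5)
The statement to prove is the Linearity proposition. Since $\mathcal{L}(X)=[O,[X,\psi]]+[[O,X],\psi]$ is built only from commutators with the fixed matrices $O$ and $\psi$, linearity will follow from bilinearity of the commutator.

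\emph{Step 1: $\C$-linearity.} Fix $A\in\C^{p\times p}$. The map $X\mapsto[A,X]=AX-XA$ is $\C$-linear, since matrix multiplication distributes over addition and commutes with complex scalars. The same holds for $X\mapsto[X,A]$. Hence both $X\mapsto[O,[X,\psi]]$ and $X\mapsto[[O,X],\psi]$ are compositions of $\C$-linear maps, and so is their sum. This gives $\mathcal{L}(\alpha X+\beta Y)=\alpha\mathcal{L}(X)+\beta\mathcal{L}(Y)$ for all $\alpha,\beta\in\C$. Taking $\alpha=i$, $\beta=0$ gives $\mathcal{L}(iP)=i\mathcal{L}(P)$.

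\emph{Step 2: $\R$-linearity.} $\R$-linearity on $\C^{p\times p}$, viewed as a real vector space via real and imaginary parts, is the special case of Step 1 with real scalars.

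For the subspaces, note that $\mathfrak{u}(p)$ and $\mathcal{H}(p)$ are real subspaces of $\C^{p\times p}$: they are closed under addition and real scaling, but not under multiplication by $i$. By \cref{prop-ad-preserve}, $\mathcal{L}$ maps each of them into itself. Therefore the restrictions $\mathcal{L}|_{\mathfrak{u}(p)}:\mathfrak{u}(p)\to\mathfrak{u}(p)$ and $\mathcal{L}|_{\mathcal{H}(p)}:\mathcal{H}(p)\to\mathcal{H}(p)$ are well-defined $\R$-linear endomorphisms.

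There is no substantive obstacle. The only point that needs care is that $\C$-linearity cannot be claimed on $\mathfrak{u}(p)$ or $\mathcal{H}(p)$ themselves, since they are not complex subspaces. Multiplication by $i$ interchanges the two, and the identity $\mathcal{L}(iP)=i\mathcal{L}(P)$ should be read in the ambient space $\C^{p\times p}$, for example with $P$ Hermitian and $iP\in\mathfrak{u}(p)$.
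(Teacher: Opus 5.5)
Your proof is correct and is the routine verification the paper intends; the paper gives no proof, only the remark that the result "is easy to prove." Your argument uses bilinearity of the commutator for $\C$-linearity, specializes to real scalars for $\R$-linearity, and cites \cref{prop-ad-preserve} to show that $\mathfrak{u}(p)$ and $\mathcal{H}(p)$ are invariant. That invariance is exactly what makes the restricted maps well-defined $\R$-linear endomorphisms.
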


An important characteristic of the Riemannian Hessian is self-adjointness (or symmetry). Recall that a linear operator $L: V \rightarrow V$ on an inner product space $(V,\langle\cdot, \cdot\rangle)$ is self-adjoint if $\langle L x, y\rangle=\langle x, L y\rangle$ for all $x, y \in V$. We now show that $\mathcal{L}$ satisfies this condition under several natural inner product structures. The proof is straightforward and is therefore omitted.

\begin{proposition}[Self-adjointness]
  Consider linear operator $\mathcal{L}$ in \cref{eq-linear-L}. $\mathcal{L}$ is self-adjoint under the following vector spaces:
  \begin{enumerate}
    \item On the complex vector space $\mathbb{C}^{p\times p}$ equipped with the Frobenius inner product $\langle A, B\rangle=\Tr(A^\dagger B)$.
    \item On the real vector space $\mathbb{C}^{p\times p} \simeq \mathbb{R}^{2p\times 2p}$ equipped with the real inner product $\langle A, B\rangle=\Re \Tr(A^\dagger B)$.
    \item On the real vector space $\mathcal{H}(p)$ equipped with $\langle A, B\rangle=\Re \Tr(A^\dagger B)=\Tr(AB)$.
    \item On the real vector space $\mathfrak{u}(p)$ equipped with $\langle A, B\rangle=\Re \Tr(A^\dagger B)=-\Tr(AB)$.
  \end{enumerate}
\end{proposition}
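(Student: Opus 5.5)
The key identity is the cyclic trace rule $\Tr([A,B]C)=\Tr(A[B,C])$, already used for \cref{eq-1823}. It lets me move commutators from one argument to the other. I would first treat item 1, on $\mathbb{C}^{p\times p}$ with $\langle A,B\rangle=\Tr(A^\dagger B)$, and then derive items 2--4 from it.

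For item 1, I write $\mathcal{L}=\mathcal{A}+\mathcal{B}$ with $\mathcal{A}(X)=[O,[X,\psi]]$ and $\mathcal{B}(X)=[[O,X],\psi]$. I will show that the adjoint of $\mathcal{A}$ is $\mathcal{B}$ and vice versa, which gives $\mathcal{L}^*=\mathcal{L}$. The basic fact is that the adjoint of $\operatorname{ad}_H=[H,\cdot\,]$ for Hermitian $H$ is $\operatorname{ad}_H$ itself:
\begin{equation}
\Tr\bigl(Y^\dagger[H,X]\bigr)=\Tr\bigl([Y^\dagger,H]X\bigr)=\Tr\bigl([H,Y]^\dagger X\bigr).
\end{equation}
Since $\mathcal{A}=-\operatorname{ad}_O\operatorname{ad}_\psi$ and $\mathcal{B}=-\operatorname{ad}_\psi\operatorname{ad}_O$, this yields $\mathcal{A}^*=-\operatorname{ad}_\psi\operatorname{ad}_O=\mathcal{B}$, so $\mathcal{L}^*=\mathcal{B}+\mathcal{A}=\mathcal{L}$. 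The sign bookkeeping here is the only delicate point, and I would check it once carefully.

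Item 2 follows immediately: if $\Tr(Y^\dagger\mathcal{L}X)=\Tr((\mathcal{L}Y)^\dagger X)$ holds for complex matrices, then taking real parts gives self-adjointness for $\Re\Tr(A^\dagger B)$.

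Items 3 and 4 follow from item 2 together with \cref{prop-ad-preserve}. That proposition shows $\mathcal{L}$ maps $\mathcal{H}(p)$ and $\mathfrak{u}(p)$ into themselves, so the restriction of a self-adjoint operator to an invariant subspace with the inherited inner product is again self-adjoint. It remains to identify the inherited inner products. On $\mathcal{H}(p)$, $\Tr(AB)$ is real, so $\Re\Tr(A^\dagger B)=\Tr(AB)$. On $\mathfrak{u}(p)$, $A^\dagger=-A$ gives $\Re\Tr(A^\dagger B)=-\Tr(AB)$, which is real because $-\Tr(AB)=\Tr(A^\dagger B)$ and $\Tr(AB)=\overline{\Tr((AB)^\dagger)}=\overline{\Tr(BA)}$ (using $(AB)^\dagger=BA$ for skew-Hermitian $A,B$).

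I expect no real obstacle; the main care is the sign and ordering in the adjoint computation of item 1. As a cross-check, I would verify it directly on Pauli basis elements through \cref{eq-L-entry}: that formula is manifestly symmetric under $r\leftrightarrow s$, which confirms the self-adjointness on $\mathfrak{u}(p)$ claimed in \cref{prop-hessian}.
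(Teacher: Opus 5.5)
Your proof is correct. The paper omits this proof as straightforward, so there is no written argument to compare against; yours is the natural direct trace-cyclicity verification it alludes to.

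The factorization $\mathcal{L}=-(\operatorname{ad}_O\operatorname{ad}_\psi+\operatorname{ad}_\psi\operatorname{ad}_O)$ is right, and so is your check that $\operatorname{ad}_H^*=\operatorname{ad}_H$ for Hermitian $H$. That check uses Hermiticity of both $O$ and $\psi$. Together these give item 1 at once, since $\mathcal{L}$ is minus an anticommutator of two self-adjoint operators.

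Items 2–4 then follow as you say: take real parts for item 2. For items 3 and 4, restrict to the subspaces shown to be invariant in \cref{prop-ad-preserve}; this avoids three separate trace computations.
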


Beyond self-adjointness, a stronger property for analyzing convergence is the positive semidefiniteness of the Riemannian Hessian. The Hessian is said to be positive semidefinite if $\langle A, \operatorname{Hess} f(U)[A]\rangle \geq 0$ for all $A \in T_U$. While global positive definiteness (for all $U$) would imply that the function is \textit{geodesically convex} (a generalization of Euclidean convexity to manifolds), this is impossible in our setting. It is a known topological result that on a compact manifold, such as the unitary group, every geodesically convex function must be constant \cite[Corollary 11.10]{boumal2023introduction}. Hence, the cost function in problem \eqref{pro-manifold} cannot be geodesically convex, and its Hessian is not positive semidefinite for all $U$. Instead, positive semidefiniteness holds only within a neighborhood of a local minimizer \cite[Proposition 6.3]{boumal2023introduction}.

\bibliographystyle{plainurl}
\bibliography{mybib}

\end{document}